\documentclass[12pt]{article}
\usepackage{geometry}
\usepackage{bbold}
\usepackage[utf8]{inputenc}
\usepackage[T1]{fontenc}
\usepackage{graphicx}
\usepackage{multirow}
\usepackage{hyperref}
\usepackage{lmodern}
\usepackage{siunitx}
\usepackage{longtable}
\usepackage{amsmath,amssymb,amsthm}
\usepackage{xspace}
\usepackage{placeins}

\newtheorem{proposition}{Proposition} 
\usepackage{algorithm}
\usepackage{algpseudocode}

\newcommand{\set}[1]{ [#1]}
\newcommand{\ZIPLN}{ZI-PLN\xspace}
\newcommand{\ZIPLNPCA}{ZI-PLN-PCA\xspace}
\newcommand{\PLNPCA}{PLN-PCA\xspace}

\usepackage{authblk}

\usepackage{listings}
\usepackage[dvipsnames]{xcolor}
\usepackage{colortbl}
\usepackage{mathtools}
\usepackage{mathrsfs}
\usepackage{url}
\usepackage[caption=false]{subfig}
\usepackage{caption}
\usepackage{titlesec}
\usepackage[bottom]{footmisc}
\usepackage{booktabs}
\usepackage{enumerate}
\usepackage{enumitem}
\usepackage{pdfpages}
\usepackage{comment}
\usepackage{adjustbox}
\usepackage{natbib} 
\usepackage{float}
\usepackage{tikz}
\usetikzlibrary{decorations.pathreplacing}

\newcommand{\argmax}{\mbox{argmax}} 
\newcommand{\Bcal}{\mathcal{B}} 
 
\newcommand{\Cov}{\mathbb{C}\text{ov}} 
\newcommand{\Esp}{\mathbb{E}} 
\newcommand{\Espt}{\widetilde{\Esp}}
\newcommand{\Hcal}{\mathcal{H}} 
\newcommand{\Jcal}{\mathcal{J}} 
\newcommand{\logit}{\operatorname{logit}} 
\newcommand{\Qcal}{\mathcal{Q}} 

\newcommand{\Mcal}{\mathcal{M}}

\newcommand{\Ncal}{\mathcal{N}} 
\newcommand{\Ocal}{\mathcal{O}} 
\newcommand{\One}{\mathbf{1}} 
\newcommand{\Pcal}{\mathcal{P}} 
\newcommand{\pt}{q} 
 
\newcommand{\Var}{\mathbb{V}} 

\renewcommand{\d}{\text{d}}

\newcommand{\Oi}{\Ocal_i} 
\newcommand{\Mi}{\Mcal_i} 
\newcommand{\probU}{\pi} 
\newcommand{\diag}{\text{diag}} 

\newtheorem{model}{Model}
\makeatletter
\@ifundefined{remark}{%
}{}
\makeatother

\title{\textbf{Inferring the presence and abundance of rare waterbirds species from scarce data}}

\author[1,2]{Barbara Bricout}
\author[2]{Laura Dami}
\author[3]{Pierre Defos du Rau}
\author[4]{Sophie Donnet}
\author[2]{Thomas Galewski}
\author[1]{Stéphane Robin}

\affil[1]{Sorbonne Université, Université Paris Cité, CNRS, LPSM, 75005 Paris, France}
\affil[2]{Tour du Valat, Institut de recherche pour la conservation des zones humides méditerranéennes, 13200 Arles, France}
\affil[3]{Office Français de la Biodiversité, DRAS, Service Conservation et Gestion Durable des Espèces Exploitées, 13200 Arles, France}
\affil[4]{Université Paris-Saclay, AgroParisTech, INRAE, UMR MIA Paris-Saclay, 91120 Palaiseau, France}

\date{}

\begin{document}

\maketitle

\begin{abstract}

Abundance data are used in ecology for species monitoring and conservation. These count data often display several specific characteristics like numerous missing data, high variance, and a high proportion of zeros, particularly when monitoring rare species.
We present a model that aims to impute missing data and estimate the effect of covariates on species presence and abundance. It is based on the log-normal Poisson model, which offers more flexibility in the variance of counts than a Poisson model. A latent variable is added for the overrepresentation of zeros in the data. The imputation of missing data is made possible by assuming that the latent variance matrix has low rank and the inclusion of covariates. \\
We demonstrate the identifiability in the presence of missing data. Since maximum likelihood inference is intractable, we use a variational expectation-maximization algorithm to infer the parameters. We provide an estimate of the asymptotic variance of the estimators and derive prediction intervals for the imputations, an estimate of the temporal trend, and a procedure for detecting a potential change in this trend. \\
We evaluate our imputations and associated prediction intervals using artificially degraded monitoring data set. We conclude with an illustration on a monitoring waterbirds data set.
    
\end{abstract}

\textbf{Keywords} : missing data, multivariate count data, population monitoring, variational expectation-maximisation



\section{Introduction}


\paragraph{Context.}
Monitoring biodiversity is essential for assessing species conservation status and informing public policies aimed at protecting ecosystems. However, biodiversity monitoring programs often face limited material and human resources. As a result, abundance data collected across space and time are frequently incomplete—this is notably the case for winter waterbird surveys in North African countries.

Bird population monitoring data generally consists of recording the number of individuals of the species studied at a series of sites on different years, as well as information (covariates) describing the spatial and temporal environment.
Missing data can significantly hinder both the estimation of overall population sizes and the statistical analyses required to detect long-term trends. Consequently, conservation assessments may be biased or incomplete. This issue is particularly critical for rare species, which tend to exhibit a high proportion of null counts and substantial variance in their counts. In the case of waterbirds, a high variability is often driven by gregarious behavior and fluctuating local aggregation patterns.
In such situations, improving our ability to detect and interpret population trends is crucial, as it provides the foundation for designing and implementing appropriate conservation actions. These challenges underline the need for robust statistical approaches capable of addressing data incompleteness, facilitating trend analysis, and characterizing the ecological attributes of sites that support rare or irregularly observed species.

\paragraph{Related works.}
To impute missing values in large-scale ecological monitoring data sets  multivariate count data, \citet{robin2019low} introduced LORI, a new imputation method refered.  This method has shown improved accuracy over more usual approaches such as TRIM (\cite{dakki2021imputation}), particularly in contexts involving many sites and time periods. However, LORI does not model zeros and may not be adapted to species with a high proportion of zeros, a common feature in data sets involving rare species.

One potential solution to this issue is to model species presence and individual abundance separately. This is the idea behind zero-inflated models, which separates structural zeros—arising from true absence—and sampling zeros due to nondetection or low abundance. Zero-inflated Poisson (ZIP) and Zero-inflated Negative Binomial (ZINB: \cite{lambert1992zero}) models have been widely used in ecology to account for excess zeros \cite{martin2005zero,wenger2008estimating,agarwal2002zero}. However, most of these models remain univariate or assume independence across sampling units, limiting their capacity to capture spatial or temporal correlations. This is why several multivariate extensions of zero-inflated models have been proposed in recent years \cite{li1999multivariate, dong2014multivariate, lee2020bayesian}. In particular, \citet{BCG24} introduce a multivariate model built upon the Poisson Log-Normal (PLN) model (first introduced by \citet{AiH89}), which incorporates a latent Gaussian variable to jointly model overdispersion and correlations between variables, while also explicitly accounting for zero inflation. The resulting model is called \ZIPLN. Due to the presence of the Gaussian latent layer, the inference of the parameters of the variant of the PLN model we cannot rely on a standard Expectation-Maximization (EM) algorithms \cite{dempster1977maximum}, and variational inference approaches \cite{blei2017variational, chiquet2019variational, chaussard2025tree} are often adopted.

\paragraph{Our contribution.}
Our aim is to explicitly address the problem of missing data, which means developing a model that is suitable for imputation. We limit ourselves to the abundance of one species, observed in different sites at different times.
We consider a model specifically designed for overdispersed count data. Our approach builds on the Poisson Log-Normal (PLN) framework, but introduces temporal dependency across years by enforcing a low-rank structure on the latent Gaussian layer likewise \citet{chiquet2018variational}. In addition, we incorporate a latent Bernoulli variable to explicitly model species presence or absence, thus accounting for zero inflation. Because the observations are collected both in time and space our framework incorporates three types of covariates associated with sites, years, and site-and-year, respectively.


We determine the condition on the missing data pattern that ensures the identifiability of the model, in the case of both a full-rank and a low-rank covariance matrix.

With regard to inference, we also rely on a variational approximation of the EM algorithm and provide an approximation of the asymptotic variance of the estimators based on \citet{WM19} (and used by \cite{BCM24} in the context of PLN models). Another important contribution is the imputation of missing data, as well as a Monte Carlo procedure for calculating prediction intervals. We also provide an empirical study of the respective effect of different pattern of missing data on the inference. Finally, we propose a strategy to infer temporal tendencies together with change point in the temporal tendencies. 

\paragraph{Paper outline.} Section \ref{sec:model} presents the data and the model, which we prove to be identifiable in presence of missing data.  
In  Section \ref{ref:inference}, we provide the expressions for the implementation of the VEM algorithm together with a method to compute confidence intervals and imputation strategies.  
In Section \ref{sec:data}, we conduct a simulation study on waterbird counts collected in two European countries during 20 years, in which missing values are artificially introduced.
Finally, in the same section, we apply the model to an incomplete dataset from the International Waterbird Census (IWC) and show how the results can be used to estimate temporal trends.
The data we use is not public at this time, which is why species names are anonymized throughout the article.


\section{\ZIPLN model and partial observations}\label{sec:model}

\subsection{Data and notations} \label{sec:notations}

\paragraph{Partial observations.}
We consider a data set collected on $n$ sites and along $p$ years. Let  $Y_{ij}$ be the number of individuals of the species of interest counted in site $i$ ($1 \leq i \leq n$) on year $j$ ($1 \leq j \leq p$). $\Omega_{ij}$  is the  indicator variable being $1$ when site $i$ has been actually visited on year $j$ and $0$ otherwise (note that, when the site is visited: $\Omega_{ij} = 1$, the observed count $Y_{ij}$ may still be zero). 
Then, we denote by $Y_i$ the $p$-dimensional vector of counts in site $i$ across the $p$ years: $Y_i = [Y_{i1} \dots Y_{ip}]^\top$;  $Y = [Y_{ij}]_{1 \leq i \leq n, 1 \leq j \leq p}$ is the complete data matrix and $\Omega = [\Omega_{ij}]_{1 \leq i \leq n, 1 \leq j \leq p}$ the matrix of indicators, from which we define 
$\Ocal =  \{(i,j) : \Omega_{ij}  = 1\}$ the set of actually observed data and 
$\Mcal =  \{(i,j) : 1 \leq i \leq n, 1 \leq j \leq p\} \setminus \Ocal$ the set of missing data. 

The complete data set $Y $ can therefore be split into the set of observed counts $Y^O$ and the set of missing counts $Y^M$:
$$
Y^O = \{Y_{ij}: (i, j) \in \Ocal\}, \qquad
Y^M = \{Y_{ij}: (i, j) \in \Mcal\}.
$$
We denote by 
$\Oi$ the set of observed years in site $i$: $\Oi = \{1 \leq j \leq p: \Omega_{ij} = 1\}$, 
$\Mi$ the set of missing years in site $i$: $\Mi = \{1, \dots p\} \setminus \Oi$
and by $Y_i^O$ and $Y_i^M$ the vectors of observed and missing counts in site $i$, respectively: $Y_i^O = [Y_{ij}]_{j \in \Oi}$, $Y_i^M = [Y_{ij}]_{j \in \Mi}$.
We assume that each $\Oi$ is never empty (each site $i$ is observed at least one year $j$).  
\paragraph{Covariates.} 
The counts are enriched by covariates that can be decomposed as:  covariates on sites (encoded in a matrix $X_R$ with dimensions  $n \times d_1$ where the index $R$ stands for "rows"), covariates on years (encoded in a matrix $X_C$ with dimensions $p \times d_2$ where the index $C$ stands for "columns") and covariates specific to each couple sites/year (encoded in a matrix $X_E$ with dimensions $np \times d_3$). The three sets of covariates are put together in the covariate matrix $X = [ X_R \otimes \One_p  \quad \One_n \otimes X_C \quad X_E]$ with dimension $np \times d$ where $ d = d_1 + d_2 + d_3$. We denote by $x_{ij}$ the $d$-dimensional vector of covariates  describing site $i$ on year $j$ and we assume that all the covariates have been observed in each site on each year.


\paragraph{Ignorability.} 
Throughout the paper, we assume that the process generating the missing data pattern $\Omega$ is independent from the counts $Y$,  conditionally on the covariates $X$. More specifically, we assume that the joint conditional distribution $p(Y, \Omega \mid X)$ factorizes as $p(Y, \Omega \mid X) = p(\Omega \mid X) p(Y \mid X)$. Furthermore, we assume ignorability (as defined by \citet{little2019statistical}), that is to say, that these two conditional distribution are respectively controlled by independent sets of parameters, that is $p_{\eta, \theta}(Y, \Omega \mid X) = p_\eta(\Omega \mid X) p_\theta(Y \mid X)$ where $\theta$ is the parameter of interest. As a consequence, marginalizing over the non-observed counts $Y^M$, we have that
\begin{equation} \label{eq:ignor}
    p_{\eta, \theta}(Y^O, \Omega \mid X) = p_\eta(\Omega \mid X) p_\theta(Y^O \mid X),
\end{equation}
so we will only have to deal with $p_\theta(Y^O \mid X)$ for the inference of $\theta$. As a counterpart, the inference we are about to describe will not inform us about the conditional distribution of the missing data pattern $p_\eta(\Omega \mid X)$ in any way. \\
In the sequel, for the sake of clarity, we will drop the conditioning with respect to the covariates $X$ in all formulas and denote $p_\theta(Y)$ (resp. $p_\theta(Y^O)$) in place of $p_\theta(Y \mid X)$ (resp. $p_\theta(Y^O \mid X)$).


\paragraph{Imputation.} 
One of our main goal is to provide predictions for the missing counts $Y^M$ based on the covariates $X$ and the observed counts $Y^O$. The former type of dependence can typically be addressed through regression terms, whereas the latter necessarily relies on some relationship between the observed and unobserved responses.  To fit these requirements and to account some data specificities, which we will describe hereafter, we consider a generalization of the Poisson log-normal model \cite{AiH89}, which has become a generic framework for community ecology \cite{CMR21}. We first describe this extension in absence of missing data.

\subsection{The Poisson-log normal model with inflation of zeros} 


Analyzing dependant count data is a significant challenge that can be addressed by statistical model-based approaches. One such approach is the multivariate Poisson-Log-Normal (PLN) model, which assumes that counts are driven by an underlying latent Gaussian variable, with dependencies among counts arising solely from latent correlations. Furthermore, the latent covariance matrix can be assumed to have a low rank, which gives rise to the PLN-PCA model \citep{chiquet2018variational}.
However, the PLN model does not accommodate zero-inflation, as can be expected for the counts of a rare bird species.
To address this limitation, \citet{GQTMC2023} or \citet{BCG24} proposed the Zero-Inflated PLN (\ZIPLN) model, which augments PLN with a multivariate zero-inflated component via an additional Bernoulli latent variable. 

\ZIPLNPCA  relates species abundances $Y$ to covariates $X$, explicitly addressing three salient features of the data: (i) abundances are non-negative integer counts; (ii) a substantial proportion of the observations are zeros (Figure~\ref{fig:hist}) and (iii)  
dependence exists among observations collected at the same site across years as follows.

\begin{model}[\ZIPLNPCA model] \label{mod:ZIPLNPCA}
  Consider $n$ sites and $p$ years, the abundances $\{Y_{ij}\}_{i \in \set{n}, j\in \set{p}}$ (where $\set{n} = \{1, \dots,n\})$ are drawn as follows:
  \begin{itemize}
      \item the presence of the species in each site and year $\{U_{ij}\}_{i \in \set{n}, j\in \set{p}}$ are all independent with respective distribution
      \begin{equation}\label{eq:PLN1}
      U_{ij} \sim \Bcal\left(\pi_{ij}\right)
      \qquad \text{with} \quad \logit(\pi_{ij}) = x^\top_{ij}\gamma ,
       \end{equation}
      where $\logit(u) = \log[u/(1-u)]$ for $0 < u < 1$;
      \item the site-specific $q$-dimensional latent vectors $\{W_i\}_{i \in \set{n}}$ are all independent with standard Gaussian distribution $\Ncal(0, I_q)$, from which the $p$-dimensional vectors $\{Z_i\}_{i \in \set{n}}$ are respectively formed as  
      \begin{equation}\label{eq:PLN3} 
      Z_i = C W_i ;
      \end{equation} 
      \item the number of individuals in each site and year are independent conditional on the   presence indicators $\{U_{ij}\}_{i \in \set{n}, j \in \set{p}}$ and the site-specific vectors $\{W_i\}_{1 \leq i \leq n}$ with respective conditional distributions
      \begin{equation}\label{eq:PLN2}
      Y_{ij} \sim \delta_{\{0\}} \quad \text{if} \quad U_{ij} = 0 
      \qquad \text{and} \qquad 
      Y_{ij} \sim \Pcal(\exp(x_{ij}^\top \beta + Z_{ij})) \quad \text{if} \quad U_{ij} = 1,
      \end{equation}
      $\delta_{\{0\}}$ being the Dirac mass at zero.
  \end{itemize}
  The set of parameters of this model is $\theta = (\beta, \gamma, C)$. 
\end{model}
From Equation \eqref{eq:PLN1}, species presence at site $i$ in year $j$ is modeled through a Bernoulli process with probability $\pi_{ij}$ parameterized by the covariates $x_{ij}$; absence yields zero abundance in a deterministic manner. 
Second (Equation \eqref{eq:PLN2}), conditional on presence, the abundance at site $i$ and year $j$ $Y_{ij}$ follows a Poisson distribution whose mean depends on both the covariates $x_{ij}$ and a latent Gaussian term capturing within-site dependency structure $Z_{ij}$. Specifically, each site $i$ is associated with a $p$-dimensional Gaussian random vector $Z_i$ with non-diagonal covariance $\Sigma$, thereby inducing correlation across years. Third, to parsimoniously represent this dependence, the covariance matrix $\Sigma = [\sigma_{jk}]_{(j,k) \in \set{p}^2}$ is assumed to have rank $q \leq p$, such that $\Sigma = CC^\top$, with $C$ a $p \times q$ real matrix and $Z_i = C W_i$, where $W_i$ is a standard $q$-dimensional Gaussian vector (see Equation \eqref{eq:PLN3}). This low-rank formulation parallels the structure of probabilistic principal component analysis (PCA) \cite{TiB99}.


Note that, in Model \ref{mod:ZIPLNPCA}, neither the indicators $U_{ij}$ nor the vectors $W_i$ are observed: they constitute the latent variables of the model. 
As for its parameters, $\gamma$ encodes the respective effects of the covariates on the presence of the species, whereas $\beta$ encodes their effects on its abundance, and $C$ encodes the dependency structure between observations from the same site.
Observe that this dependency only acts on the abundances (conditional on the presence), and not on the presence itself.



Model~\ref{mod:ZIPLNPCA} is closely related to the framework introduced by \citet{BCG24}, but differs in several key aspects. Conceptually, while \cite{BCG24} consider  multiple (dependent) species observed across several sites, our model focuses on a single species monitored at multiple sites over several years. 
To this respect, in Model~\ref{mod:ZIPLNPCA} the different years play the same role as the different species in \cite{BCG24}. A consequence is that our  model must incorporate three types of covariates specific to sites, to years and to site--year (interaction), respectively, whereas only the first type of covariate is included in \cite{BCG24}. Obviously, this modifies the matrix expression of the (complete) likelihood.
However, the most substantial distinction  lies in the inference procedure: our observation setting requires estimation in the presence of missing data (from $Y^O$), a frequent occurrence in ecological studies, and we aim to predict unobserved abundances $Y^M$. 
The assumption of a low-rank covariance structure $\Sigma$ ensures robustness of its estimation, which is desirable in presence of numerous missing data.



The moments of the distribution defined in Model \ref{mod:ZIPLNPCA}
are easily computed and can be interestingly interpreted. 

\begin{proposition}\label{prop:moments}
    Under Model \ref{mod:ZIPLNPCA}, we have
\begin{align} \label{eq:esp_moments}
   \Esp(Y_{ij}) & = \pi_{ij} \exp(x_{ij}^\top \beta + \sigma_j^2/2) =: E_{ij} ,  \\
    \Var(Y_{ij}) & = E_{ij} +  E^2_{ij} \; (e^{\sigma_j^2}- \pi_{ij}) / \pi_{ij} , \\
    \Cov(Y_{ij}, Y_{ik}) &= E_{ij}  E_{ik} (e^{\sigma_{jk}} - 1) &  \text{for } j \neq k , 
\end{align}
where $\Sigma = (\sigma_{jk})_{(j,k)\in \set{p}^2}$ and $\sigma_j^2 = \sigma_{jj}$. 
\end{proposition}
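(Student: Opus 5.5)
We compute the moments by conditioning on the latent variables $(U_i, W_i)$ and applying the laws of total expectation and total covariance. The facts used are: conditionally on $(U_i, W_i)$ the counts $Y_{ij}$ are independent, with $Y_{ij} \mid U_{ij}, Z_{ij} \sim U_{ij}\,\Pcal(\mu_{ij} e^{Z_{ij}})$ where $\mu_{ij} := \exp(x_{ij}^\top\beta)$; the $U_{ij}$ are independent of $W_i$ and of each other; and $Z_i \sim \Ncal(0, \Sigma)$ with $\Sigma = CC^\top$. For the Gaussian part we only need the moment generating function. For real coefficients $a$, we have $\Esp(e^{a^\top Z_i}) = \exp(a^\top \Sigma a/2)$. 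This yields $\Esp(e^{Z_{ij}}) = e^{\sigma_j^2/2}$, $\Esp(e^{2Z_{ij}}) = e^{2\sigma_j^2}$, and $\Esp(e^{Z_{ij}+Z_{ik}}) = e^{(\sigma_j^2+\sigma_k^2)/2 + \sigma_{jk}}$.

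\textbf{Mean.} We have $\Esp(Y_{ij} \mid U_{ij}, Z_{ij}) = U_{ij}\,\mu_{ij} e^{Z_{ij}}$. By independence of $U_{ij}$ and $Z_{ij}$, taking expectations gives $\pi_{ij}\mu_{ij} e^{\sigma_j^2/2} = E_{ij}$.

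\textbf{Variance.} We use $\Var(Y_{ij}) = \Esp[\Var(Y_{ij}\mid U,Z)] + \Var[\Esp(Y_{ij}\mid U,Z)]$. Because $U_{ij}^2 = U_{ij}$, the conditional variance is $U_{ij}\mu_{ij}e^{Z_{ij}}$, and its expectation is $E_{ij}$. For the second term,
\[
\Esp\bigl[(U_{ij}\mu_{ij}e^{Z_{ij}})^2\bigr] = \pi_{ij}\mu_{ij}^2 e^{2\sigma_j^2} = \frac{E_{ij}^2}{\pi_{ij}} e^{\sigma_j^2},
\]
so this term equals $E_{ij}^2 e^{\sigma_j^2}/\pi_{ij} - E_{ij}^2 = E_{ij}^2 (e^{\sigma_j^2} - \pi_{ij})/\pi_{ij}$. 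Adding the two terms gives the stated variance.

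\textbf{Covariance, $j \neq k$.} Conditional independence makes the conditional covariance vanish, so
\[
\Cov(Y_{ij}, Y_{ik}) = \Cov\bigl(U_{ij}\mu_{ij}e^{Z_{ij}},\, U_{ik}\mu_{ik}e^{Z_{ik}}\bigr).
\]
Using independence of $U_{ij}$, $U_{ik}$ and $Z_i$, the cross moment is
\[
\pi_{ij}\pi_{ik}\mu_{ij}\mu_{ik}\, e^{(\sigma_j^2+\sigma_k^2)/2+\sigma_{jk}} = E_{ij}E_{ik}\, e^{\sigma_{jk}}.
\]
Subtracting $E_{ij}E_{ik}$ gives $E_{ij}E_{ik}(e^{\sigma_{jk}}-1)$.

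None of these steps is a real obstacle. The one point needing care is the variance: the factor $U_{ij}^2 = U_{ij}$ contributes $\pi_{ij}$ rather than $\pi_{ij}^2$, and that is what produces the $-\pi_{ij}$ inside the parentheses after rewriting in terms of $E_{ij}$. Note also that the low-rank structure plays no role, since only the marginal law $\Ncal(0,\Sigma)$ of $Z_i$ enters.
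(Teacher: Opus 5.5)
Your proof is correct and follows essentially the same route as the paper's proof in Appendix~\ref{proof:propmoments}: condition on the latent layer, use the log-normal (Gaussian moment generating function) moments of the Poisson intensity, and then account for the Bernoulli presence indicator. The only difference is that you derive the PLN moments directly rather than citing them. You also make explicit, via the laws of total variance and covariance and the identity $U_{ij}^2 = U_{ij}$, the step from presence-conditional to unconditional moments, which the paper leaves as ``accounting for the possible absence of the species.''
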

The proof is given in Appendix \ref{proof:propmoments}. Two desirable properties can be observed there.  First, thanks to the   random term $Z_{ij}$, this distribution is 'over-dispersed' in the sense that the   variance exceed the  expectation: $\Var(Y_{ij}) > \Esp(Y_{ij})$. Second, the covariance $\Cov(Y_{ij}, Y_{ik})$ between the counts $Y_{ij}$ and $Y_{ik}$   has the same sign as the corresponding entry $\sigma_{jk}$ of the covariance matrix $\Sigma$.


%



\subsection{Identifiability from partial observations} 

We need to  prove that the parameters  $\theta= (\beta,\gamma, C)$ 
define uniquely the distribution of the observations $Y^O$. 
Or equivalently, that, if $(\theta,\theta')$ are such that the likelihood functions are equal, i.e.   $p_{\theta}(Y^O) = p_{\theta'}(Y^O)$ 
for any $Y^O$, then $\theta = \theta'$.
The identifiability relies on assumptions on the set of observed counts $\Ocal$ and on  $X^O$ the matrix of covariates  for the observed counts

First, note that  the covariance matrix $\Sigma$ is defined as $\Sigma = CC^\top$. As a consequence, $C$ is only identifiable up to a rotation matrix, so the identifiability can only be obtained for $\Sigma$. 
We present two results of identifiability of $(\beta, \gamma, \Sigma)$. In Proposition \ref{prop:identif1} we establish the identifiability of $(\beta, \gamma, \Sigma)$ with no assumption on the rank of $\Sigma$. In Proposition \ref{prop:identif2} we relax the assumptions on $\Ocal$ to obtain the identifiability of the extra diagonal terms of $\Sigma$ when $\Sigma$ is a low rank matrix.  These results were established by \citet{BCG24} for a full observation of the table: we extend them to partial observations and for a low rank covariance matrix $\Sigma$.  


Let us define 
$$
\Qcal =  \{(j,k)\in \set{p}^2 : \exists i  \in  \set{n}\text{ such that } \Omega_{ij} = \Omega_{ik}=1 \}. $$
$\Qcal$ contains the pairs or years (columns) $(j,k)$ such that   at least one site $i$ has been visited at years $j$ and $k$. 

\begin{proposition}[Identifiability of $(\beta, \gamma, \Sigma)$]\label{prop:identif1}
Assume that 
\begin{description}
  \item[\mbox{[A1]}]~$\Qcal = \set{p}^2$.  
  \item[\mbox{[A2]}]~$X^O$ is full rank
\end{description}
then $(\beta, \gamma, \Sigma)$ are identifiable. 
\end{proposition}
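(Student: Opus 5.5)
Since the distribution of $Y^O$ is a product over sites of the laws of $Y_i^O$, equality of the likelihoods $p_\theta(Y^O)=p_{\theta'}(Y^O)$ for all $Y^O$ implies, for each site $i$, equality of the laws of $Y_i^O$, and hence of all their univariate and bivariate marginals: the law of $Y_{ij}$ for $j\in\Oi$ and the joint law of $(Y_{ij},Y_{ik})$ for $j,k\in\Oi$. The plan is to follow \citet{BCG24} coordinatewise: first identify $(\pi_{ij}, x_{ij}^\top\beta, \sigma_j^2)$ from each observed univariate marginal, then identify $\sigma_{jk}$ from observed bivariate marginals, and finally recover $(\beta,\gamma)$ using [A2].

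\emph{Univariate step.} For $(i,j)\in\Ocal$, $Y_{ij}$ follows a zero-inflated Poisson log-normal law: $P(Y_{ij}=y)=(1-\pi_{ij})\mathbb 1_{y=0}+\pi_{ij}\,\mathrm{PLN}(\mu_{ij},\sigma_j^2)(y)$ with $\mu_{ij}=x_{ij}^\top\beta$. For $y\ge1$ only the PLN part contributes, so we have $P(Y_{ij}=y)=\pi_{ij}\Esp[e^{-\lambda}\lambda^y]/y!$ with $\lambda=e^{\mu_{ij}+\sigma_jZ}$. Ratios $P(Y=y+1)(y+1)/P(Y=y)$ remove $\pi_{ij}$, and the law of $\lambda$ conditional on $\lambda$-tilting is determined by these ratios. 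More concretely, I would use that the sequence $y!P(Y=y)/\pi_{ij}$, $y\ge1$, consists of the moments $\Esp[\lambda^y e^{-\lambda}]$ of the finite measure $e^{-\lambda}\,dF_\lambda$. This measure is determined by its moments because $\lambda$ is log-normal and, under the $e^{-\lambda}$ tilt, it has exponentially decaying tails, so the Carleman condition holds. Normalising removes $\pi_{ij}$: the moments for $y\ge1$ identify the tilted measure up to the scalar $\pi_{ij}$, and we then fix $\pi_{ij}$ using the $y\ge 1$ data together with $\Esp[\lambda^0e^{-\lambda}]$, which is computable once the tilted measure's shape, hence $F_\lambda$, is known. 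Untilting gives the log-normal law of $\lambda$, hence $\mu_{ij}$ and $\sigma_j^2$, and then $\pi_{ij}$ from $P(Y\ge1)$. This moment-determinacy step, which separates $\pi_{ij}$ from the mixture, is the main obstacle. I would try first to cite the corresponding lemma of \citet{BCG24} directly, since the univariate marginal is exactly theirs.

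\emph{Bivariate step.} For $(j,k)\in\Qcal$ with $j\neq k$, pick a site $i$ observing both years. Using Proposition~\ref{prop:moments}, $\Cov(Y_{ij},Y_{ik})=E_{ij}E_{ik}(e^{\sigma_{jk}}-1)$. The quantities $E_{ij},E_{ik}$ are already identified from the univariate step, and the covariance is a functional of the identified bivariate law, so $\sigma_{jk}$ is identified because $x\mapsto e^x-1$ is injective. If $E_{ij}E_{ik}>0$ this works; it always holds since $\pi\in(0,1)$. Under [A1], every pair, including the diagonal terms (each $j$ is observed at some site), is covered, so $\Sigma$ is identified.

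\emph{Regression step.} We now know $\mu_{ij}=x_{ij}^\top\beta$ and $\logit\pi_{ij}=x_{ij}^\top\gamma$ for all $(i,j)\in\Ocal$, i.e.\ $X^O\beta$ and $X^O\gamma$. Since $X^O$ has full column rank by [A2], $\beta=\beta'$ and $\gamma=\gamma'$, which concludes the proof.
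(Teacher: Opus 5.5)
Your proof is correct. Its bivariate step (recovering $\sigma_{jk}$ from $\Cov(Y_{ij},Y_{ik})=E_{ij}E_{ik}(e^{\sigma_{jk}}-1)$ at a site observing both years) and its regression step (recovering $\beta,\gamma$ from $X^O\beta$ and $X^O\gamma$ via [A2]) are the same as in the paper.

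The genuine difference is the univariate step. The paper never uses the probability mass function. It relies only on the factorial moments $\Esp[Y_{ij}(Y_{ij}-1)\cdots(Y_{ij}-k+1)]=\pi_{ij}\exp(k\mu_{ij}+k^2\sigma_j^2/2)$ for $k=1,2,3$. In its notation these give $Q_1=\Esp[Y_{ij}^2]/\Esp[Y_{ij}]-1=e^{\mu_{ij}+3\sigma_j^2/2}$ and $Q_2=\Esp[Y_{ij}^3]-3\Esp[Y_{ij}^2]+2\Esp[Y_{ij}]=\pi_{ij}Q_1^3$. It then inverts these in closed form for $(\pi_{ij},\mu_{ij},\sigma_j^2)$. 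This uses only finiteness of moments and the parametric log-normal form, so no moment-problem issue arises, which is worth noting since the log-normal itself is moment-indeterminate.

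Your route instead identifies the whole mixing law nonparametrically. The values $y!\,P(Y_{ij}=y)$ for $y\geq1$ are moments of $\pi_{ij}e^{-\lambda}dF_\lambda$, and the exponential tilt restores determinacy. This is heavier and not explicit, but more general: it shows $(\pi_{ij},F_\lambda)$ is identifiable for any mixing distribution with no atom at $0$, with the log-normal form used only at the very end.

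One point should be tightened. The order-$0$ moment $\Esp[e^{-\lambda}]$ is not observed, because $P(Y_{ij}=0)$ mixes structural and Poisson zeros. So you should apply determinacy to $\pi_{ij}\lambda e^{-\lambda}dF_\lambda$. Its order-$y$ moments are the known quantities $(y+1)!\,P(Y_{ij}=y+1)$ for $y\geq0$, and its moment generating function is finite for $s<1$, so it is moment-determinate. Then divide by $\lambda$, which is legitimate because $\lambda>0$ almost surely. Untilting gives $\pi_{ij}F_\lambda$, whose total mass is directly $\pi_{ij}$.
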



Assumption [A1] means that, for each pair of years $(j,k)$, at least one site $i$ has been observed both years. [A2] is a standard assumption for (generalized) linear models.
Both assumptions are straightforward to verify and are quite realistic in the context of ecological observations. 


\begin{proof}
The proof relies on the computation of the  three first moments of the \ZIPLN distribution.  

Let us denote $\mu_{ij}= x_{ij}^\top \beta$.  We first remind that the $k$-th moment of the log-normal distribution with parameters $(\mu,\sigma^2)$ is $\exp(k \mu + k^2 \sigma^2/2)$. Thus, using the first three moments of the Poisson distribution and Proposition \ref{prop:moments}, we can write that: 
\[
\begin{aligned}
\mathbb{E}[Y_{ij}] &= \pi_{ij} \cdot \exp(\mu_{ij} + \tfrac{1}{2} \sigma_j^2) , \\
\mathbb{E}[Y_{ij}^2] &= \pi_{ij} \cdot \left[ \exp(\mu_{ij} + \tfrac{1}{2} \sigma_j^2) + \exp(2\mu_{ij} + 2\sigma_j^2) \right] , \\
\mathbb{E}[Y_{ij}^3] &= \pi_{ij} \cdot \left[ \exp(\mu_{ij} + \tfrac{1}{2} \sigma_j^2) + 3\exp(2\mu_{ij} + 2\sigma_j^2) + \exp(3\mu_{ij} + \tfrac{9}{2}\sigma_j^2) \right] . 
\end{aligned}
\]
(the details are provided in Appendix section \ref{proof:propmoments}). 
Let us define the two following quantities: 
\begin{align*}
Q_1(Y_{ij}) 
&:= \frac{\mathbb{E}[Y_{ij}^2] }{\mathbb{E}[Y_{ij}]} - 1 =  \exp(\mu_{ij} + \tfrac{3}{2} \sigma_j^2) , \\
Q_2(Y_{ij}) 
&:= \mathbb{E}[Y_{ij}^3] - 3\mathbb{E}[Y_{ij}^2] + 2\mathbb{E}[Y_{ij}] 
= \pi_{ij} \exp(3\mu_{ij} + \tfrac{9}{2}\sigma_j^2)
= \pi_{ij} Q_1(Y_{ij})^3. 
\end{align*}
We are now able to express $\pi_{ij}$, $\mu_{ij}$ and $\sigma_j^2$ as functions of theses statistics and the expectation.  
\begin{align*}
  \pi_{ij} = & \frac{Q_2(Y_{ij})}{Q_1(Y_{ij})^3}, &
  \sigma_j^2 = &  \log \left(\frac{Q_2(Y_{ij})}{Q_1(Y_{ij})^2 \Esp(Y_{ij})}\right), &
  \mu_{ij} = & 4  \log \left( Q_1(Y_{ij})\right) + \frac{3}{2} \log \left( \frac{\Esp(Y_{ij})}{Q_2(Y_{ij})} \right).
\end{align*}
Using the fact that for all $(i,j) \in \Ocal$, $\mu_{ij} = x_{ij}^\top \beta$ and $\logit(\pi_{ij})  = x_{ij}^\top \gamma$, $\beta$ and $\gamma$ are identifiable provided $X^O$ is of full rank (Assumption [A2]). 
Finally, the covariance expression of Proposition \ref{prop:moments}, provides
$$
\sigma_{jk} = \log\left(\frac{ \Cov(Y_{ij}, Y_{ik})}{\Esp(Y_{ij}) \Esp(Y_{ik})} + 1\right) \qquad \text{for } (j,k) \mbox{ such that }  \exists \;  i \;  | \; (i,j) \in \Ocal \mbox{ and }  (i,k) \in \Ocal  
$$
And $ \sigma_{jk}$ is identifiable as soon as 
at least one site $i$ was visited during the two years $j$ and $k$.  
As a consequence all the entries of $\Sigma = (\sigma_{jk})_{(j,k) \in \set{p}}$ are identifiable 
under Assumption [A1].

\end{proof}


However, Assumptions [A1] and [A2] do not account for the fact that $\Sigma$ has rank $q < p$, which may require weaker assumptions.
Indeed, when $ \Sigma$ is assumed to be of low rank ($q \ll p$), one can hope to relax   assumption  [A1] to obtain the identifiability of the extra-diagonal terms of $\Sigma$.    
To go further, we use the results by \citet{BishopYu14} which provide conditions on $\Ocal$ for $\Sigma$ to be fully recovered from a partial identifiability, i.e. in cases where $\Qcal  \subsetneq \set{p}^2$. 

$\Qcal$ contains the pairs $(j,k)$ such that $\sigma_{jk}$ is identifiable from Proposition \ref{prop:identif1}, i.e.  at least one site $i$ has been visited at years $j$ and $k$. 
Observe that, if $\Qcal = \set{p}^2$, the complete identifiability of $\Sigma$ holds thanks to Proposition \ref{prop:identif1}: we therefore focus on the case where $\Qcal   \subsetneq \set{p}^2$. 
We provide conditions on $\Qcal$  and $\Sigma$ leading to the full recovering of $\Sigma$ from $(\sigma_{jk})_{(j,k) \in \Qcal}$. 

\begin{proposition}[Identifiability of $(\beta, \gamma, \Sigma)$ when $\Sigma$ has rank $q < p$]\label{prop:identif2} 
Assume that 
\begin{description}
 \item[\mbox{[A3]}] ~$\forall j  \in \set{p}$,  $ \exists i \in \set{n}$ such that $\Omega_{ij} = 1$.  
 \item[\mbox{[A4]}] ~$\Qcal$   can be written as the union of $r$ ordered product subsets   $\Qcal = \cup_{\ell=1}^r \Jcal_\ell \times \Jcal_\ell$   such that   
 $$
 \text{rank}\left((\sigma_{jk})_{(j,k) \in \varsigma_\ell^2}\right) \geq q
 \quad \mbox{  where } \quad  
\varsigma_\ell   =   \Jcal_{\ell} \cap (\cup_{\ell'=1}^{\ell-1} \Jcal_{\ell'}).  $$
\end{description}
Under Assumptions [A3] and [A4], $\Sigma$ can be completely recovered from $(\sigma_{jk})_{(j,k) \in \Qcal}$. As a consequence,  $(\beta,\gamma,\Sigma)$ are identifiable provided [A2], [A3] and [A4] hold.  
\end{proposition}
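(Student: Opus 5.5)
The plan has two stages: recover the parameters from marginal moments exactly as in the proof of Proposition \ref{prop:identif1}, then complete $\Sigma$ from its known blocks by induction on $\ell$.

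\emph{Stage 1: marginal and pairwise identification.} Under [A3], every year $j$ is observed in at least one site $i$. The moment formulas based on $Q_1(Y_{ij})$, $Q_2(Y_{ij})$ and $\Esp(Y_{ij})$ then identify $\pi_{ij}$, $\mu_{ij}$ and $\sigma_j^2$ for every $(i,j)\in\Ocal$. Hence every diagonal entry of $\Sigma$ is identified, and $\beta,\gamma$ follow from [A2]. For each pair $(j,k)\in\Qcal$, the covariance formula of Proposition \ref{prop:moments} identifies $\sigma_{jk}$. So the whole family $(\sigma_{jk})_{(j,k)\in\Qcal}$ is determined by $p_\theta(Y^O)$. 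It remains to show that this family determines the rank-$q$ positive semidefinite matrix $\Sigma$.

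\emph{Stage 2: inductive completion.} Write $\Sigma=CC^\top$ with $C\in\mathbb{R}^{p\times q}$. For any index set $\mathcal{A}$, let $C_{\mathcal{A}}$ denote the rows of $C$ indexed by $\mathcal{A}$. Set $\mathcal{U}_\ell=\cup_{\ell'\le \ell}\Jcal_{\ell'}$. I would prove by induction on $\ell$ that the block $\Sigma_{\mathcal{U}_\ell\times\mathcal{U}_\ell}$ is determined by the known entries.

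For $\ell=1$, the block $\Jcal_1\times\Jcal_1$ lies in $\Qcal$, so it is known.

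For the inductive step, suppose $\Sigma_{\mathcal{U}_{\ell-1}}$ is known. Both $\Sigma_{\mathcal{U}_{\ell-1}}$ and $\Sigma_{\Jcal_\ell}$ are known, and they overlap on $\varsigma_\ell$. Since $\Sigma_{\varsigma_\ell}=C_{\varsigma_\ell}C_{\varsigma_\ell}^\top$ has rank $\ge q$ by [A4], $C_{\varsigma_\ell}$ has rank exactly $q$. Factor $\Sigma_{\mathcal{U}_{\ell-1}}=AA^\top$ and $\Sigma_{\Jcal_\ell}=BB^\top$ with $A,B$ having $q$ columns. Each factorization is unique up to an orthogonal $q\times q$ matrix, i.e.\ $A=C_{\mathcal{U}_{\ell-1}}R_1$ and $B=C_{\Jcal_\ell}R_2$. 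Restricting both factorizations to $\varsigma_\ell$ gives $A_{\varsigma_\ell}A_{\varsigma_\ell}^\top=B_{\varsigma_\ell}B_{\varsigma_\ell}^\top$. Both restrictions have full column rank $q$, so there is a unique orthogonal $R$ with $B_{\varsigma_\ell}=A_{\varsigma_\ell}R$.

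Consequently the cross block is determined:
$$\Sigma_{(\mathcal{U}_{\ell-1}\setminus\varsigma_\ell)\times(\Jcal_\ell\setminus\varsigma_\ell)}=C_{\mathcal{U}_{\ell-1}\setminus\varsigma_\ell}C_{\Jcal_\ell\setminus\varsigma_\ell}^\top = A_{\mathcal{U}_{\ell-1}\setminus\varsigma_\ell}\,R\,B_{\Jcal_\ell\setminus\varsigma_\ell}^\top,$$
and this quantity is independent of the choices of $A$ and $B$. Equivalently, one can write it directly as $\Sigma_{\cdot,\varsigma_\ell}\Sigma_{\varsigma_\ell}^{+}\Sigma_{\varsigma_\ell,\cdot}$. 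This formula follows because $C_{\varsigma_\ell}^{+}C_{\varsigma_\ell}=I_q$ when $C_{\varsigma_\ell}$ has full column rank. The formula is the cleanest route and avoids the rotations altogether. This completes the induction.

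Finally, [A3] together with $\Qcal=\cup_\ell \Jcal_\ell\times\Jcal_\ell$ gives $\cup_\ell\Jcal_\ell=\set{p}$, because $(j,j)\in\Qcal$ for every $j$. So $\mathcal{U}_r=\set{p}$ and all of $\Sigma$ is identified.

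The main obstacle is justifying the Schur-complement identity $\Sigma_{ab}=\Sigma_{a\varsigma}\Sigma_{\varsigma}^{+}\Sigma_{\varsigma b}$ from the rank condition alone. The key point is that $\operatorname{rank}\Sigma_{\varsigma_\ell}\ge q$ forces $C_{\varsigma_\ell}$ to have full column rank, so that $C_{\varsigma_\ell}^\top(C_{\varsigma_\ell}C_{\varsigma_\ell}^\top)^{+}C_{\varsigma_\ell}=I_q$, which I would check via the SVD of $C_{\varsigma_\ell}$. A secondary point is that $q$ must be known, or be read off as the rank of any known block. I would take $q$ as part of the model specification, consistent with the results of \citet{BishopYu14}, which can be cited for this completion step.
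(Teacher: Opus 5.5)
Your proof is correct, but it takes a more self-contained route than the paper. The paper's proof of this proposition is essentially a citation. It observes that [A3] gives every $\sigma_j^2$, hence $(j,j)\in\Qcal$ for all $j$, which is (A1) of \citet{BishopYu14}. It then states that [A4] bundles their (A2)--(A4) and invokes their Theorem 2. The identification of $\beta,\gamma$ and of $(\sigma_{jk})_{(j,k)\in\Qcal}$ is tacitly inherited from the moment argument of Proposition \ref{prop:identif1}.

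You instead prove the completion step yourself, by induction on $\ell$. The key is the explicit formula $\Sigma_{ab}=\Sigma_{a\varsigma_\ell}\Sigma_{\varsigma_\ell}^{+}\Sigma_{\varsigma_\ell b}$. It holds because $\operatorname{rank}\Sigma_{\varsigma_\ell}\ge q$ forces $C_{\varsigma_\ell}$ to have full column rank, so that $C_{\varsigma_\ell}^\top(C_{\varsigma_\ell}C_{\varsigma_\ell}^\top)^{+}C_{\varsigma_\ell}=I_q$. This is the mechanism behind Bishop and Yu's result.

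The citation buys brevity. Your version buys a constructive reconstruction and shows exactly where the rank condition enters. It also makes explicit points the paper leaves implicit:
\begin{itemize}
\item the condition in [A4] can only be meant for $\ell\ge 2$, since $\varsigma_1=\emptyset$, and your base case handles this;
\item $\cup_\ell\Jcal_\ell=\set{p}$ follows from [A3].
\end{itemize}

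One point to spell out in a write-up: identifiability compares $\Sigma$ with any competitor $\Sigma'=C'C'^\top$ that agrees with it on $\Qcal$. The rank condition holds for $\Sigma'$ too, because $\Sigma'_{\varsigma_\ell}=\Sigma_{\varsigma_\ell}$. Your formula then applies to $\Sigma'$ with the same inputs at each step of the induction, which yields $\Sigma'=\Sigma$.
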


Assumption [A3] means that for any year $j$, at least one site $i$ has been visited.  Assumption [A4] imposes a structure on the pairs of years that are observed at least on one site. Such a structure is illustrated in Figure \ref{fig:Identif}. 

\begin{proof}
The proof follows from Theorem 2 of \cite{BishopYu14}. 
From Assumption [A3], we deduce that every $\sigma_j^2$ is identifiable. As a consequence, for every $j\in \set{p}$, $(j,j) \in \Qcal$ and so Assumption  (A1)  from \cite{BishopYu14} holds.  Our assumption [A4] gathers assumptions (A2), (A3) and (A4) from \cite{BishopYu14}. As a consequence Theorem 2 of \cite{BishopYu14} applies and  $\Sigma$ can be completely recovered from its principal sub-matrices  $(\Sigma_{jk})_{(j,k) \in \Qcal_\ell}$. 
\end{proof}

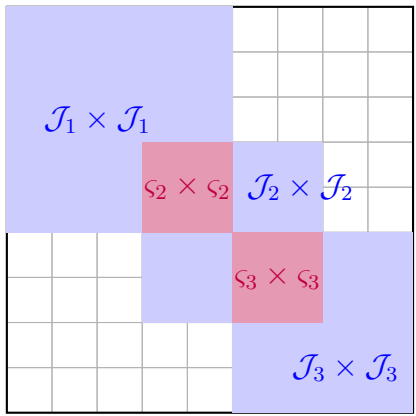
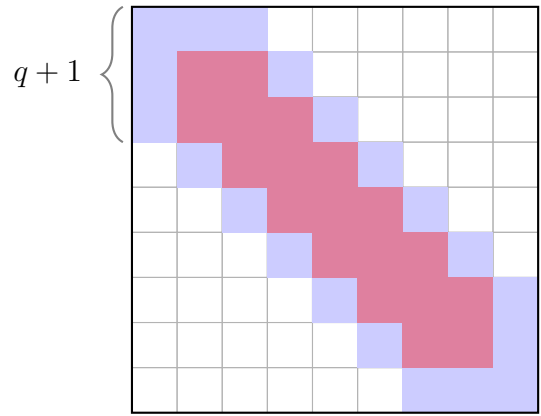
\begin{figure}[H]
\centering

\subfloat[$\Sigma$ of rank $2$ can be completely recovered from $\Sigma_{\Qcal}$ provided $\Sigma_{\varsigma_2\times\varsigma_2}$ and $\Sigma_{\varsigma_3\times\varsigma_3}$ are of rank $2$.\label{fig:Identif}]{
\centering
\begin{tikzpicture}[scale=0.6]

\def\n{9}      
\def\p{5}      
\def\q{4}      
\def\r{2}      
\def\s{4}      
\def\c{6}      

\pgfmathsetmacro{\a}{\p-\r+1}       
\pgfmathsetmacro{\b}{\a+\q-1}       
\pgfmathsetmacro{\d}{\c+\s-1}       

\pgfmathsetmacro{\aa}{6}

\foreach \i in {1,...,\n}{
  \foreach \j in {1,...,\n}{
    \draw[gray!60] (\i, -\j) rectangle (\i+1, -\j+1);
  }
}

\draw[thick] (1, 0) rectangle (\n+1, -\n);

\foreach \i in {1,...,\p}{
  \foreach \j in {1,...,\p}{
    \fill[blue!20] (\i, -\j) rectangle (\i+1, -\j+1);
  }
}

\foreach \i in {\a,...,\b}{
  \foreach \j in {\a,...,\b}{
    \fill[blue!20] (\i, -\j) rectangle (\i+1, -\j+1);
  }
}

\foreach \i in {\c,...,\d}{
  \foreach \j in {\c,...,\d}{
    \fill[blue!20] (\i, -\j) rectangle (\i+1, -\j+1);
  }
}

\foreach \i in {\a,...,\p}{
  \foreach \j in {\a,...,\p}{
    \fill[purple!40] (\i, -\j) rectangle (\i+1, -\j+1);
  }
}

\foreach \i in {\aa,...,\b}{
  \foreach \j in {\aa,...,\b}{
    \fill[purple!40] (\i, -\j) rectangle (\i+1, -\j+1);
  }
}

\node[blue] at (3,-\p/2) {$\Jcal_1 \times \Jcal_1$};
\node[purple] at (\a+1,-\p+1) {$\varsigma_2 \times \varsigma_2$};
\node[purple] at (\aa+1,-\b/2-1/2-2) {$\varsigma_3 \times \varsigma_3$};
\node[blue] at (7+1/2,-4) {$\Jcal_2 \times \Jcal_2$};
\node[blue] at (7+3/2,-8) {$\Jcal_3 \times \Jcal_3$};

\end{tikzpicture}
}
\hfill
\subfloat[Illustration of a small $\Qcal$ with respect to $p^2$.]{%
\centering
\label{fig:identif2}
\begin{tikzpicture}[scale=0.6]

\def\p{9}     
\def\q{2}     

\foreach \i in {1,...,\p}{
  \foreach \j in {1,...,\p}{
    \draw[gray!60] (\i, -\j) rectangle (\i+1, -\j+1);
  }
}

\foreach \k in {1,...,\numexpr\p-\q\relax}{
  \foreach \i in {\k,...,\numexpr\k+\q\relax}{
    \foreach \j in {\k,...,\numexpr\k+\q\relax}{
      \fill[blue!20] (\i, -\j) rectangle (\i+1, -\j+1);
    }
  }
}

\foreach \k in {1,...,\numexpr\p-\q-1\relax}{
  \foreach \i in {\numexpr\k+1\relax,...,\numexpr\k+\q\relax}{
    \foreach \j in {\numexpr\k+1\relax,...,\numexpr\k+\q\relax}{
      \fill[purple!50] (\i, -\j) rectangle (\i+1, -\j+1);
    }
  }
}

\draw[thick] (1,0) rectangle (\p+1,-\p);

\draw[decorate, decoration={brace, amplitude=8pt}, thick, color=gray]
 (0.8,-\q-1) -- (0.8,0)
 node[midway, xshift=-1cm, color=black] {$q+1$};

\end{tikzpicture}
}

\caption{Illustrations of $\Qcal$ defined in Proposition \ref{prop:identif2} for $p=9$ and $q=2$.}
\label{fig:model_outputs_north_africa}
\end{figure}

\paragraph{Illustration.} In order to better understand the implications of Proposition \ref{prop:identif2}, we present in Figure \ref{fig:identif2} a configuration where  $\Qcal$ is small with respect to $\set{p}^2$. 
Assume that  $\Sigma$ is of rank $q$ and that the set $\Ocal$ is such that $\Qcal$ is defined as a union of $p-q$ matrices $\Qcal_\ell$ of size $(q+1)\times (q+1)$ with intersection  of size $q \times q$, organized as in Figure \ref{fig:identif2}. In that case, $\Qcal$ is of cardinal $p^2- (p-q-1)(p-q) = q^2 + p(2q-1) \leq p^2$. As a consequence, taking into account the rank of $\Sigma$ allows one to drastically reduce the set number of observations. 

\begin{figure}
\centering 

\label{}
\end{figure}
\section{Inference}\label{ref:inference}

We introduce the method we use for the inference of the parameter set $\theta = (\beta, \gamma, C)$, based on the available data $X^O$, $Y^O$. We denote by $U = [U_{ij}]_{1 \leq i \leq n, 1\leq j\leq p}$ the matrix of indicator variables and by $U^O$ the set of indicator variables restricted to the observed sites and years: $U^O = \{U_{ij}: (i, j) \in \Ocal\}$.

\subsection{Lower bound of the likelihood}

\paragraph{Likelihood.} 
The usual way to go with latent variable models is the Expectation-Maximization (EM) algorithm \cite{dempster1977maximum}, which relies on the following decomposition of the log-likelihood of the observed data:
\begin{align*}
  \log p_\theta (Y^O)
  & = \Esp_\theta [\log p_\theta (Y^O, U, W) \mid Y^O] - \Esp_\theta [\log p_\theta (U, W \mid Y^O) \mid Y^O] \\
  & = \Esp_\theta [\log p_\theta (Y^O, U^O, W) \mid Y^O] - \Esp_\theta [\log p_\theta (U^O, W \mid Y^O) \mid Y^O] 
\end{align*}
because of the independence of the indicator variables $(U_{ij})$.
$\log p_\theta (Y^O, U^O,W)$ is often called the complete (log-)likelihood and $ -\Esp_\theta [\log p_\theta (U^O, W \mid Y^O) \mid Y^O]$ is the conditional entropy of $(U^O, W)$ given $Y^O$, which we denote by $\Hcal[p_\theta (U^O, W \mid Y^O)]$. 
The interest of this decomposition, is that it involves the so-called complete log-likelihood $\log p_\theta (Y^O, U^O, W)$, which is usually easier to deal with. Because the sites are independent, the complete log-likelihood of the observed counts under Model \ref{mod:ZIPLNPCA} is
\begin{equation} \label{eq:logLik}
  \log p_\theta(Y^O, U^O, W) = \sum_{i=1}^n \log p_\theta(Y_i^O, U_i^O, W_i)
\end{equation}
where
\begin{align} \label{eq:logLiki}
  \log p_\theta(Y_i^O, U_i^O, W_i)
  & = \log p_\theta (Y_i^O \mid U_i^O, W_i) + \log p_\theta(U_i^O) + \log p_\theta(W_i) \nonumber \\
  & = \sum_{j=1}^p \left(\Omega_{ij} \log p_\theta (Y_{ij} \mid U_{ij}, W_i) +  \Omega_{ij} \log p_\theta(U_{ij})\right) + \log p_\theta(W_i) \nonumber \\
  & = \sum_{j=1}^p \Omega_{ij} U_{ij} \left(Y_{ij}(x_{ij}^\top \beta + W_i^\top C_j) - \exp(x_{ij}^\top \beta + W_i^\top C_j)\right) \\
  & \quad + \sum_{j=1}^p \Omega_{ij} \left(U_{ij} (x_{ij}^\top \gamma) - \log(1 + \exp(x_{ij}^\top \gamma))\right) 
  - \frac12 \|W_i\|^2 \nonumber\\
  & \quad - \frac{q}2 \log{2 \pi} - \sum_{j=1}^p \Omega_{ij} U_{ij} \log(Y_{ij}!) \nonumber
\end{align}
where $C_j$ ($1 \leq j \leq p$) stands for the vector made of $j$-th row of $C$. Observe that the terms from the last line do not depend on $\theta$

The counterpart of the EM algorithm is that it requires the calculation of a series of moments under the conditional distribution $p_\theta ( U^O, W \mid Y^O)$. Unfortunately, as in the case of Poisson log-normal model, this conditional distribution has no close form and the calculation of its moments is intractable  \cite{chiquet2018variational}. To circumvent this issue, we resort to a variational approximation.

\paragraph{Variational Approximation.} 
Variational approximation refers to a wide variety of techniques, which consists in carrying approximate inference for latent variable models, replacing the intractable true conditional distribution $p_\theta (U^O, W \mid Y^O)$ with some approximate distribution $\pt_\psi(U^O, W)$, chosen in a convenient parametric distribution class \cite{WaJ08} ($\psi$ being the parameter ruling the approximate distribution). The estimation procedure then consists in maximizing the evidence lower bound (ELBO)
\begin{align} \label{eq:defElbo}
     J(\theta, \psi)
     & = \log p_\theta(Y^O) - \text{KL}[\pt_\psi(U^O, W) \ \mid p_\theta(U^O, W \mid Y^O)] \\
     & = \Espt_\psi[\log p_\theta(Y^O, U^O, W)] + \Hcal(\pt_\psi(U^O, W)) \nonumber
\end{align}
where $\text{KL}$ stands for the K\"ullback-Leibler divergence and $\Espt_\psi$ for the expectation under distribution $\pt_\psi$. We remind that $\pt_\psi(Z)$ is supposed to be close to $p_\theta(Z \mid Y^O)$, so $\Espt[f(Z)] \simeq \Esp_\theta[Y \mid Y^O]$. Because the $\text{KL}$ divergence is non-negative, the ELBO is obviously a lower bound of the log-likelihood of the observed data: $J(\theta, \psi) \leq \log p_\theta(Y^O)$. The $\psi$ of the approximate distribution $\pt$, is called the variational parameter.

Because of the independence of the sites, the intractable conditional distribution has obviously a product form:
$$
p_\theta (W, U^O \mid Y^O)
= \prod_{i=1}^n p_\theta (W_i, U_i^O \mid Y_i^O),
$$
so we may approximate each site-specific conditional distribution $p_\theta (W_i, U_i^O \mid Y_i^O)$ independently. Following \citet{chiquet2018variational} and \citet{BCG24}, we seek for the approximate distribution within the set of products of $|\Ocal_i|$ independent Bernoulli (for the indicator variables $U_{ij}$) with a $q$-dimensional Gaussian (for $W_i$):
\begin{equation} \label{eq:apprClass}
  \pt_{\psi_i} = \left(\bigotimes_{j \in \Ocal_i} \Bcal(\xi_{ij})\right) \otimes \Ncal(m_i, S_i),
\end{equation}
where each mean vector $m_i$ has dimensional $q$ and each $q \times q$ variance matrix $S_i$ is supposed to be diagonal: $S_i = \diag(s_i)$, where $s_i$ stands for the variance vector $s_i = [s^2_{i1} \dots s^2_{iq}]$. Hence, the variational parameter is $\psi_i = ((\xi_{ij})_{j \in \Ocal_i}, m_i, s_i)$. We stack all the mean vectors to form the $n \times q$ matrix $M = [m_1 \dots m_n]^\top$ and the variances vectors $s_i$ to form the matrix $S = [s_i]^\top_{1 \leq i \leq n}$. We further define $\xi^O = \{\xi_{ij}: (i, j) \in \Ocal\}$ so that the whole set of variational parameters is $\psi = (\xi^O, M, S)$.

\subsection{Variational inference}

Based on the complete likelihood given in Equations \eqref{eq:logLik} and \eqref{eq:logLiki}, we may write the ELBO defined in Equation \eqref{eq:defElbo} as
\begin{align} \label{eq:elbo}
    J(\theta, \psi) = \sum_{i=1}^n J_i(\theta, \psi)
\end{align} 
where $J_i(\theta, \psi)$ is the contribution of site $i$ to the ELBO:
\begin{align} \label{eq:elboi}
  J_i(\theta, \psi)
  & := \Espt_\psi\left[\log p_\theta(Y_i^O \mid U_i^O, W_i) + \log p_\theta(U_i^O) + \log p_\theta(W_i) - \log \pt_\psi(U_i^O, W_i)\right] \nonumber \\
  & = \Espt_\psi\left[\sum_{j=1}^p \Omega_{ij} \left(U_{ij} \log p_\theta (Y_{ij} \mid U_{ij}, W_i) + \log p_\theta(U_{ij}) - \log \pt_\psi(U_{ij}) \right) \right] \nonumber \\
  & \quad + \Espt_\psi\left[\log p_\theta(W_i) - \log \pt_\psi(W_i) \right]  \nonumber\\
  & = \sum_{j=1}^p \Omega_{ij} \xi_{ij} \left(Y_{ij}(x_{ij}^\top \beta + m_i^\top C_j) - \exp(x_{ij}^\top \beta + m_i^\top C_j + C_j^\top S_i C_j/2)\right) \\
  & \quad + \sum_{j=1}^p \Omega_{ij} \left(\xi_{ij} (x_{ij}^\top \gamma) - \log(1 + \exp(x_{ij}^\top \gamma) - \xi_{ij} \log(\xi_{ij})) - (1 - \xi_{ij}) \log(1 - \xi_{ij})\right) \nonumber \\
  & \quad - \frac12 \left(\|m_i\|^2 + \text{tr}(S_i) - \sum_{k=1}^q \log(s_{ik}^2) \right) 
  - \sum_{j=1}^p \Omega_{ij} \xi_{ij} \log(Y_{ij}!). \nonumber
\end{align}

\paragraph{Maximization of the ELBO.}
Variational approximations are often used in the context of the so-called variational EM (VEM) algorithm, which mimics the classical EM algorithm \cite{WaJ08}, alternating the optimization with respect to the model parameters $\theta$ (M step) and with respect to the variational parameters $\psi$ (VE step). In the present case, we choose to maximize the ELBO $J(\theta, \psi)$ with respect to the model parameters $\theta = (\beta, \gamma, C)$ and the variational parameters $\psi = (\xi, M, S)$ at once, using gradient ascent. Observe that, because of missing data, the ELBO to be maximised differs from this of \citet{BCG24}.

Denoting by $A_{ij}$ the expectation of the abundance $Y_{ij}$ conditional on the presence of the species under the variational approximation
$$
A_{ij} 
:= \Espt_\psi[Y_{ij} \mid U_{ij} = 1]
= \exp(x_{ij}^\top \beta + m_i^\top C_j + C_j^\top S_i C_j/2),
$$
and by $\odot$ the Hadamard product, the derivatives of the contribution $J_i(\theta, \psi)$ of the $i$-th site to the ELBO with respect to the model parameters are
\begin{align*}
  \partial_\gamma J_i & = \sum_j \Omega_{ij} (\xi_{ij} - \pi_{ij} ) x_{ij}, & 
  \partial_{C_j} J_i & = \Omega_{ij}\xi_{ij} ((Y_{ij} - A_{ij})m_i - A_{ij} (C_j \odot s_i)), \\
  \partial_\beta J_i & = \sum_j \Omega_{ij} \xi_{ij} (Y_{ij} - A_{ij}) x_{ij}.
\end{align*}
As for the variational parameters, the derivative of $J_i(\theta, \psi)$ are non zero only for the elements of $\psi_i$ and are
\begin{align*}
  \partial_{m_i} J_i & = \sum_j \Omega_{ij} \xi_{ij} (Y_{ij} - A_{ij})C_j - m_i, \\
  \partial_{s_i} J_i & = \frac12 \left(s_i^{\odot -1} - \One_q - \sum_j \Omega_{ij} \xi_{ij} A_{ij} (C_j \odot C_j)\right), \\
\partial_{\xi_{ij}} J_i & = \nu_{ij} + \Omega_{ij} \left(-A_{ij} + Y_{ij} (\mu_{ij} + C_j^\top m_i) - \log(Y_{ij} !)\right) - \log \left(\frac{\xi_{ij}}{1 - \xi_{ij}}\right),
\end{align*}
where $s_i^{\odot -1} = [1/s_{i1}^2 \dots 1/s_{iq}^2]$ and $1_q$ is the $q$-dimensional vector filled with ones.

The VEM algorithm consists of iteratively updating $\theta$ and $\psi$ as
$\theta^{(h+1)} = \argmax_\theta J(\theta, \psi^{(h)})$ and $\psi^{(h+1)} = \argmax_\psi J(\theta^{(h+1)}, \psi)$, 
until convergence. 

\subsection{Selection of the latent dimension $q$} 

The dimension of the latent space $q$ has to be chosen from the data. Because the \ZIPLNPCA model is identifiable and regular, a Bayesian Information Criterion (BIC) criterion can be formally derived as in \citet{Sch78}. The same argument holds to derive an Integrated Completed likelihood (ICL) criterion as in \citet{BCG00}. The resulting criteria are
\begin{align} 
    \mbox{BIC}(Y^O,q)& = \log p_{\widehat{\theta}_q}(Y^O) - \frac{1}{2} \mbox{pen}(d,q)\log (n)  \label{eq:BIC} \\  
    \mbox{ICL}(Y^O,q) &= \log p_{\widehat{\theta}_q}(Y^O)  - \Hcal(p_{\widehat{\theta}}(U^O, W \mid Y^O))  - \frac{1}{2} \mbox{pen}(d,q)\log (n) \label{eq:ICL}
\end{align}
where the penalty term 
$$
\mbox{pen}(d,q) = (pq + 2d)
$$ 
takes into account the dimensions of $\gamma$, $\beta$ and $C$. 
The difference between BIC and ICL lies in the $ \Hcal(p_{\widehat{\theta}}(U^O, W \mid Y^O)) $ term. With respect to BIC, ICL will penalize the entropy of the latent distribution, which is directly linked to the latent space dimension $q$. On the contrary, BIC will allow larger $q$, which will catch a more complex dependence between times. \\ 
In practice, because the log-likelihood $\log p_{\widehat{\theta}_q}(Y^O)$ has no explicit expression under the \ZIPLNPCA model, we replace it with the ELBO $J(\widehat{\theta}, \widehat{\psi})$ which is assumed to be a good approximation of it, provided that the divergence between $p_{\widehat{\theta}}(U^O, W~\mid~Y^O)$ and $\pt_{\widehat{\psi}}(U^O, W)$ is small. Similarly, in the ICL criterion, the entropy term $\Hcal(p_{\widehat{\theta}}(U^O, W~\mid~Y^O))$ is replaced by  $\Hcal(\pt_{\widehat{\psi}}(U^O, W) )$ which  can be computed explicitly as : $\sum_{i=1}^n \sum_{j=1}^p - (\xi_{ij} \log(\xi_{ij}) + (1 - \xi_{ij}) \log(1 - \xi_{ij})) + \frac{1}{2} \sum_{i = 1}^n \sum_{j = 1}^p \log(s_{ij}) + \frac{nq}{2}$ if $0 < \xi_{ij} < 1$ and $\frac{1}{2} \sum_{i = 1}^n \sum_{j = 1}^p \log(s_{ij}) + \frac{nq}{2}$ ortherwise.


\subsection{Confidence intervals for the model parameters} \label{sec:varTheta}


We now derive the approximate variance of the parameters estimator $\widehat{\theta}$, which results from the variational algorithm presented above. 
To this aim, we follow the line introduced by \citet{WM19}, which borrows from the theory of $M$-estimators. 
Denoting 
$$
\widetilde{\psi}(\theta) := \arg\max_\psi J(\theta, \psi),
$$
the variational estimator $\widehat{\theta}$ is the $M$-estimator associated with the contrast
$$
\widetilde{J}(\theta) := J(\theta, \widetilde{\psi}(\theta)).
$$
Provided that $\widehat{\theta}$ is consistent, the theory of $M$-estimation then yields that its asymptotic variance is $\Var(\theta) = C(\theta)^{-1} D(\theta) C(\theta)^{-1}$ (see, e.g., \cite{VdV2000}, chapter 5), with
\begin{align*}
    C(\theta) & = \Esp_\theta\left[\nabla^2_{\theta \theta}\widetilde{J}(\theta)\right], & 
    D(\theta) & = \Esp_\theta\left[\nabla_\theta \widetilde{J}(\theta) [\nabla_\theta \widetilde{J}(\theta)]^\top\right].
\end{align*}
The derivatives of $\widetilde{J}$ with respect to $\theta$ can then be obtained \citep[see][]{WM19} as:
\begin{align*}
    \nabla_{\theta} \widetilde{J}(\theta) 
    & = \nabla_{\theta} J(\theta, \psi), \\
    \nabla^2_{\theta \theta} \widetilde{J}(\theta) 
    & = \nabla_{\theta \theta} J(\theta, \psi) - \nabla_{\theta \psi} J(\theta, \psi) \left[\nabla_{\psi \psi} J(\theta, \psi)\right]^{-1} [\nabla_{\theta \psi} J(\theta, \psi)]^\top
\end{align*}
evaluated at $\psi = \widetilde{\psi}(\theta)$ (which makes the first equation hold because, $\widetilde{\psi}(\theta)$ being a maximum, $\nabla_{\psi} J(\theta, \psi) = 0$ for $\psi = \widetilde{\psi}(\theta)$).
Finally, unbiased estimates of $C(\theta)$ and $D(\theta)$ are given by the following empirical means
\begin{align*}
    \widehat{C}_n 
    & = \frac{1}{n} \sum_{i = 1}^n \nabla_{\theta \theta} J_i(\widehat{\theta}, \widetilde{\psi}(\widehat{\theta})) - \nabla_{\theta \psi} J_i(\widehat{\theta}, \widetilde{\psi}(\widehat{\theta})) \left[\nabla_{\psi \psi} J_i(\widehat{\theta}, \widetilde{\psi}(\widehat{\theta}))\right]^{-1} [\nabla_{\theta \psi} J_i(\widehat{\theta}, \widetilde{\psi}(\widehat{\theta}))]^\top, \\
    \widehat{D}_n 
    & = \frac{1}{n} \sum_{i = 1}^n \nabla_\theta J_i(\widehat{\theta}, \widetilde{\psi}(\widehat{\theta})) \; [\nabla_\theta J_i(\widehat{\theta}, \widetilde{\psi}(\widehat{\theta}))]^\top.
\end{align*}
Approximate confidence intervals for the model parameters can then be constructed, based on the estimated asymptotic variance $\widehat{\Var}_n(\widehat{\theta}) = \widehat{C}_n^{-1} \widehat{D}_n \widehat{C}_n^{-1}$ and the Gaussian approximation 
\begin{equation} \label{eq:asympNormal}
\widehat{\theta} \approx \Ncal\left(\theta^*, \widehat{\Var}_n(\widehat{\theta}) \right),
\end{equation}
$\theta^*$ being the true value of the parameter.
Observe that the validity of the estimated variance and the Gaussian approximation (also used in \cite{BCM24}) relies on the consistency of $\widehat{\theta}$, which we do not prove here.
As a consequence, the variance of the estimates we provide is only approximate. Its accuracy will be assessed in Section \ref{sec:data}, through the accuracy of the prediction intervals.

\subsection{Imputation}

We now turn to the imputation part, that is to the prediction of the unobserved counts $Y^{M}$  given the observed ones $Y^O$ (and the covariates). A natural predictor for the missing observation $Y_{ij}$ with $j \in \Ocal_i$ is its conditional expectation given the corresponding vector of covariates $x_{ij}$ and the set of observed counts $Y^O$: $\Esp_\theta[Y_{ij} \mid Y^O]$. Because of the independence between the sites, conditioning on the observed counts within the same site (that is: $Y_i^O$) is obviously sufficient, so we aim at evaluating $\Esp_\theta[Y_{ij} \mid Y_i^O]$. \\
Unfortunately, we do not have access to this conditional distribution $p_\theta(\cdot \mid Y^O)$, but only to its variational approximation $q_\psi(\cdot)$. Under this approximation, the distribution of the missing observation $Y_{ij}$ is a univariate \ZIPLN distribution with presence probability $\xi_{ij}$, log-expectation $x_{ij}^\top \beta + m_{ij}$ and latent variance $C_j^\top S_i C_j$. Therefore, following Proposition \ref{prop:moments}, we consider the conditional prediction: 
\begin{align} \label{eq:condPred}
    \widetilde{Y}_{ij}
    = \widetilde{\xi}_{ij} \exp\left(x_{ij}^\top \widehat{\beta} + \widehat{C}_j^\top \widetilde{m}_i + \widehat{C}_j^\top \widetilde{S}_i \widehat{C}_j  / 2 \right),
\end{align}
where $\widetilde{\xi}_{ij}$, $\widetilde{m}_i$ and $\widetilde{S}_i$ are the variational parameter estimates. Observe that the latent correlation structure imposed by Model \ref{mod:ZIPLNPCA} enables us to borrow information from one year to another, within a same site. 

In addition, we aim to assign each prediction a prediction interval that reflects both the intrinsic variability of Model \ref{mod:ZIPLNPCA} and the uncertainty associated with parameter estimation. To obtain these intervals, we designed the Monte Carlo Algorithm \ref{algo:condPI}.

\begin{algorithm}[H]
\caption{Conditional predictions and prediction intervals}
\label{algo:condPI}
    \begin{description}
        \item[{\sl Input:}]~Estimates $\widehat{\theta}$ and $\widehat{\Var}(\widehat{\theta})$, number of particles $B$. 
        \item[{\sl Output:}]~Conditional intervals for the $\Esp(Y_{ij} \mid Y^O)$ and conditional prediction interval for the $Y_{ij}$. 
        \item[{\sl Algorithm:}] ~
        \begin{enumerate}
            \item For $b = 1 \dots B$, 
            \begin{enumerate}
                \item Sample $\theta^b \sim \Ncal\left(\widehat{\theta}, \widehat{\Var}(\widehat{\theta})\right)$, and denote $\gamma^b$, $\beta^b$ and $\Sigma^b$ the resulting model parameters;
                \item Perform a VE-step to get $\widetilde{\psi}(\theta^b)$, and denote $\xi^b$, $M^b$ and $S^b$ the resulting variational parameters and, for each $(i, j) \in \Mcal$, compute
                $$
                \widetilde{Y}^b_{ij}
                = \widetilde{\xi}^b_{ij} \exp\left(x_{ij}^\top \beta^b + (m_i^b)^\top C^b_j + (C^b_j)^\top S^b_i C^b_j  / 2 \right),
                $$
                \item For each site $i$, such that $\Mcal_i \neq \varnothing$, sample independently $W^b_i \sim \Ncal(m^b_i, S^b_i)$, compute $Z_i^b = C^b W_i^b$, and for each year $j \in \Mcal_i$, sample $U^b_{ij} \sim \Bcal(\xi_{ij}^b)$, set $Y^b_{ij} = 0$ if $U^b_{ij} = 0$ and, otherwise, sample independently 
                $$
                Y^b_{ij} \sim \Pcal\left(\exp\left(x_{ij}^\top \beta^b + (mi^b)^\top C^b_j\right)\right);
                $$ 
            \end{enumerate}
            \item Determine the intervals for each missing  observation $(i, j) \in \Mcal$:
            \begin{itemize}
                \item Use the empirical quantiles of the $(\widetilde{Y}^b_{ij})_{1 \leq b \leq B}$ to determine a conditional confidence interval for each conditional expected count $\Esp(Y_{ij} \mid Y^O)$;
                \item Use the empirical quantiles of the $(Y^b_{ij})_{1 \leq b \leq B}$ to determine a conditional prediction interval for $Y_{ij}$.
            \end{itemize}
        \end{enumerate}
    \end{description}
\end{algorithm}

Algorithm \ref{algo:condPI} is computationally demanding because of the VE-step performed in step $(b)$ and because of the different sampling performed in steps $(b)$ and $(c)$. If only a confidence interval for $\Esp(Y_{ij} \mid Y^O)$ is needed, the computational burden can be reduced by skipping step $(c)$. Still the most demanding step $(b)$ remains, but it is required to share information between observed and unobserved counts within a same site.

\paragraph{Marginal prediction.}
Alternatively, we may consider the unconditional (that is marginal) expectation $\Esp_\theta(Y_{ij})$ and consider the prediction
\begin{align} \label{eq:margPred}
    \widehat{Y}_{ij}
    = \widehat{\pi}_{ij} \exp\left(x_{ij}^\top \widehat{\beta} + \widehat{\sigma}_{jj} / 2\right).
\end{align}
A Monte-Carlo algorithm can be designed in the same way to get both marginal confidence and prediction intervals, which does not resort to any additional VE-step: see Algorithm \ref{algo:margPI} in Appendix \ref{app:margPred}. If only a marginal confidence interval is needed, Algorithm \ref{algo:margPI} reduces to step $(a)$, which makes it even faster.


\section{Illustrations on waterbirds survey}\label{sec:data} 

\newcommand{\covar}{G}
\newcommand{\full}{F}

To  assess the performance and illustrate the use of our model and the associated inference strategy, we consider two waterbird population monitoring: one from Europe and one from North Africa. As the data used in this section is not entirely public, the bird species we will examine will be designated by capital letters (A, B, and C).

\subsection{European waterbird}


To assess the accuracy of our predictions in a realistic setting, we first consider a European monitoring, which is complete (i.e. which has no missing data), and in which we  artificially introduce missing values.

\paragraph{Dataset.}
The data set we are examining comprises winter censuses of species~A carried out at 228 sites in European countries between 1992 and 2012. 
The proportion of zeros in this data set is 37\% of all site-year observations. The average number of individuals counted is approximately 11, with a variance of 573, reflecting high overdispersion.
Three types of covariates (such as described in Section \ref{sec:notations}) are also available:  covariates on sites (latitude, distance to the city, etc.), covariates on years (weather) and covariates on sites/year couples (winter precipitation) (see Appendix~\ref{app:covariates} for all covariates).

\paragraph{Objectives.}
This section focuses on the quality of imputations within our model.
The first objective of this section is to evaluate the performances of the proposed methodology under different patterns of missing data. 
Secondly, we question the importance of the choice of covariates for imputation purposes.  
Thirdly, we aim to compare the performance of a zero-inflated model with that of a non–zero-inflated model.
Finally, we want to assess the covering rate of the prediction intervals.

\paragraph{Simulation design.} 
We considered four different scenarios for the missing data pattern : one Missing Completely at Random (MCAR) scenario and three Missing at Random (MAR) scenarios. 
In the MCAR case, missingness was introduced uniformly across sites and years. 
The MAR scenarios incorporated structured missingness: (i) depending on the year, with a decreasing proportion of missing values over time; (ii) depending on the site, with some sites consistently less frequently monitored; and (iii) depending jointly on sites and years, a configuration closely matching the monitoring conditions observed in North Africa. For each mechanism, we removed a proportion of  5\%, 30\%, 50\%, and 70\% of the observations from the original dataset.  100 replicates were generated for each configuration, resulting in $4 \times 4 \times 100 = 1600$ data sets.

We considered two alternative sets of covariates for the imputation step, denoted $X^{\covar}$ (for genuine) and $X^{\full}$ (for full). 
Both are constructed according to the general design matrix structure described in Section~2,
$$
X = \bigl[ X_R \otimes I_p \;\; I_n \otimes X_C \;\; X_E \bigr].
$$
The set $X^{\covar}$ includes year- and site-specific covariates through the matrices $X_R$ and $X_C$. 
By contrast, $X^{\full}$ includes only sites- and years-specific effects, with $X_R = I_n$ and $X_C = I_p$, and the first column of each matrix removed to ensure identifiability. The model based on the second set obviously includes more information about each site (or year) than the first, as the linear span of $X^{\covar}$ is strictly included in the linear span of $X^{\full}$.


\paragraph{Evaluation criteria.} 

We fitted the \ZIPLNPCA  model with the two sets of covariates and the \PLNPCA model with the second set of covariates to each of the generated data sets. 
In both cases, we selected $q$, the dimension of the latent space, using the BIC criterion for a single replicate per configuration (pattern $\times$ rate of missing data) and kept these value for $q$ for all other replicates.
For each missing observation $(i, j) \in \Mcal$, we computed the conditional prediction $\widetilde{Y}_{ij}$ as defined in Equation \eqref{eq:condPred} and the conditional prediction interval using Algorithm \ref{algo:condPI}.
Because the actual count $Y_{ij}$ is known, we measured the accuracy of the imputation with the absolute prediction error $|\widetilde{Y}_{ij} - Y_{ij}|$.
To limit the computation time, we evaluated the prediction intervals only for the first 10 replicates of each type. 



\paragraph{Results.} 
The proposed method turned out to be relatively insensitive to the missingness pattern. Therefore we only present the results for the MAR site and year scenarios, as they represent the patterns closest to reality, since they generate large spatio-temporal areas of missing data. 
The results under the three other missingness scenario are similar and given in Figure~\ref{fig:ratio_comparison} in Appendix~\ref{app:sim2}.
We designate by $(X, \PLNPCA)$ the results obtained with the covariate matrix $X$ ($X^\covar$ or $X^\full$) and the model, say, \PLNPCA.


The first aim was to evaluate the role of covariates in improving imputation accuracy.
The top left panel of Figure \ref{fig:compare} displays the ratio between the absolute prediction errors obtained with either the genuine data set or the set including the site and year effects. The figure show that the latter specification consistently yields smaller prediction errors across all missingness levels.
This result is expected, as site and year effects can be viewed as the most informative covariates describing the years and sites, but it emphasizes the dramatic influence of the available covariates on the accuracy of the predictions and motivates the effort to collect them.
The additional simulation study presented in Appendix~\ref{app:sim1} shows that imputation accuracy improves as the proportion of variance explained by the covariates increases, and the results presented here provide empirical support for  this relationship. Consequently, for imputation purposes, using site and year effects (that is $X^{\full}$) is more appropriate than relying on the covariates describing the sites and the years (that is $X^{\covar}$).

\begin{figure}[H]
    \centering
    \begin{tabular}{cc}
         Prediction error & Prediction error \\
         ($X^{\covar}$, \ZIPLNPCA) vs ($X^{\full}$, \ZIPLNPCA) &  
         ($X^{\full}$, \ZIPLNPCA) vs  ($X^{\full}$, \PLNPCA) \\
         \includegraphics[width=.45\textwidth, height=.15\textheight]{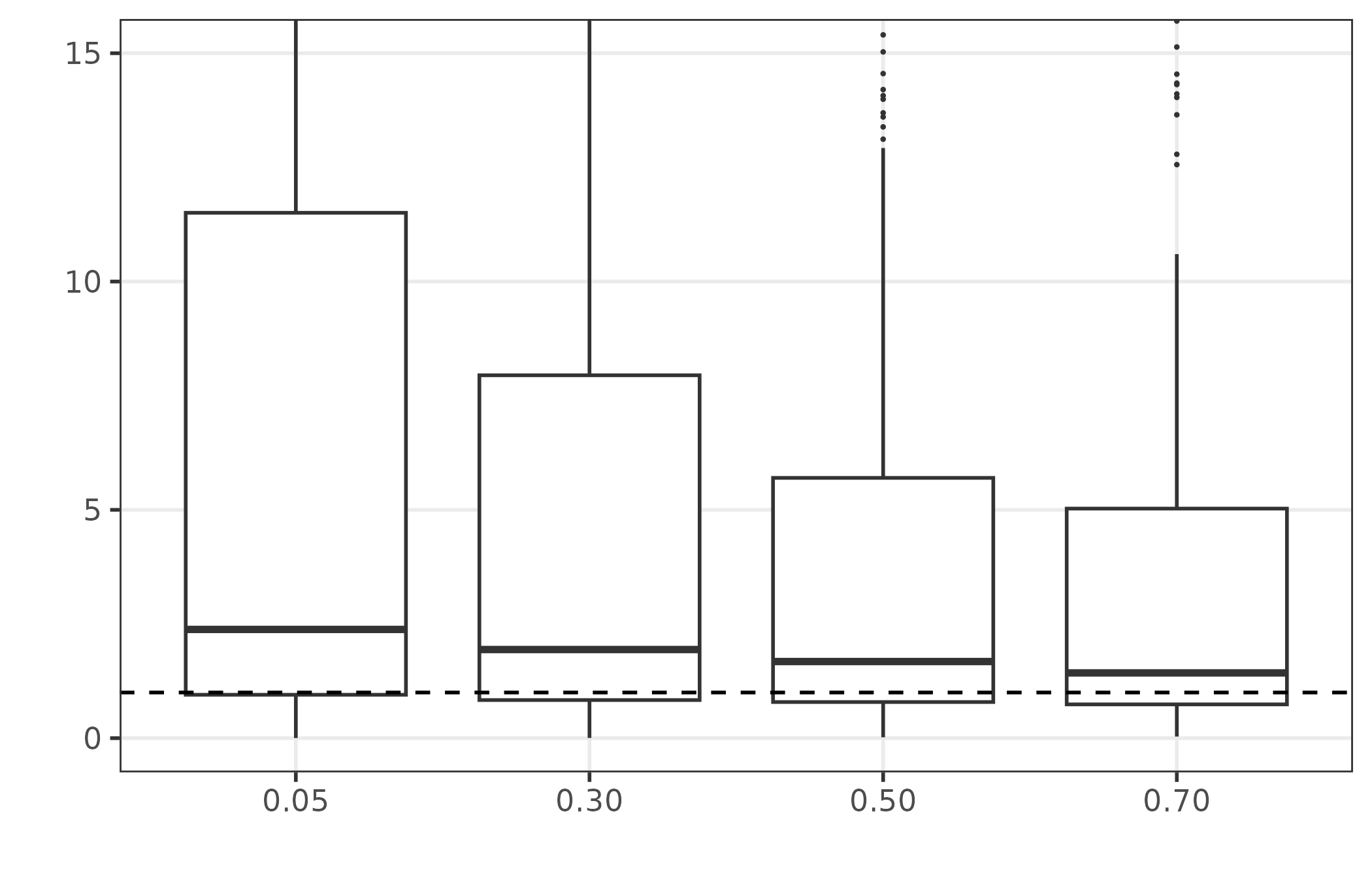} &
         \includegraphics[width=.45\textwidth, height=.15\textheight]{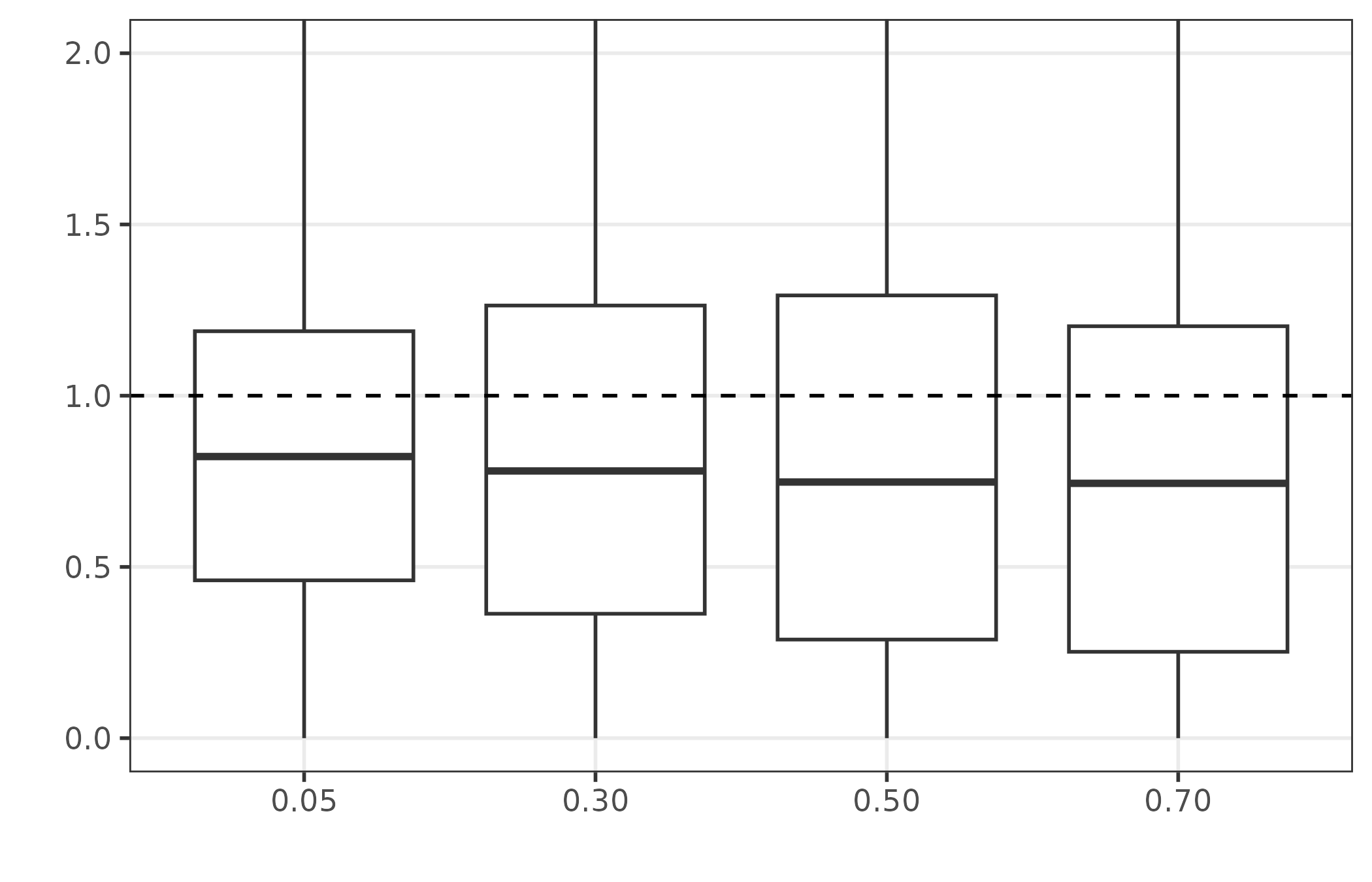} \\
         Width of the prediction intervals, MAR & Width of the prediction intervals, MCAR \\ 
         ($X^{\full}$, \ZIPLNPCA) & ($X^{\full}$, \ZIPLNPCA) \\
         \includegraphics[width=.45\textwidth, height=.15\textheight]{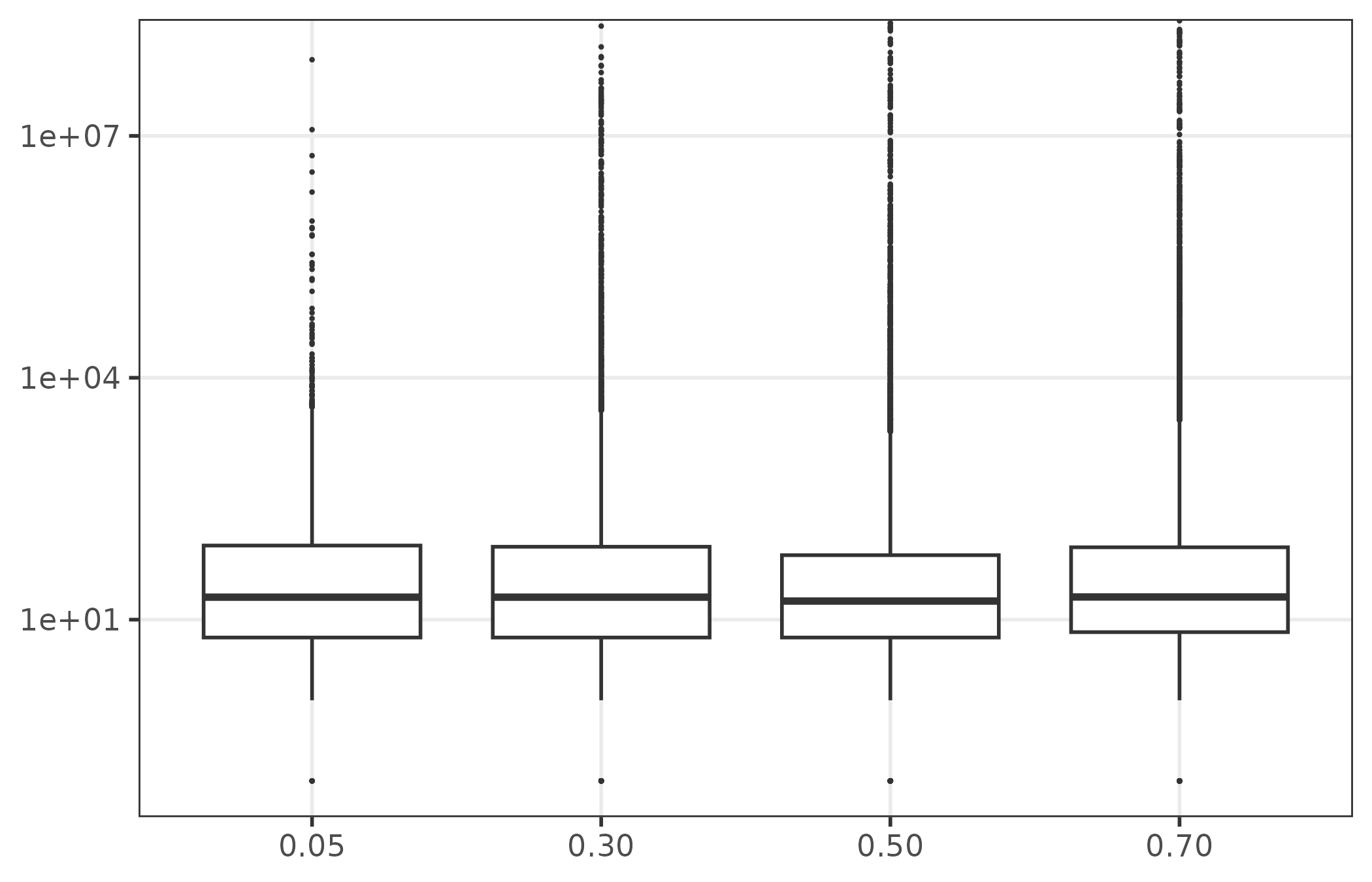} & \includegraphics[width=.45\textwidth, height=.15\textheight]{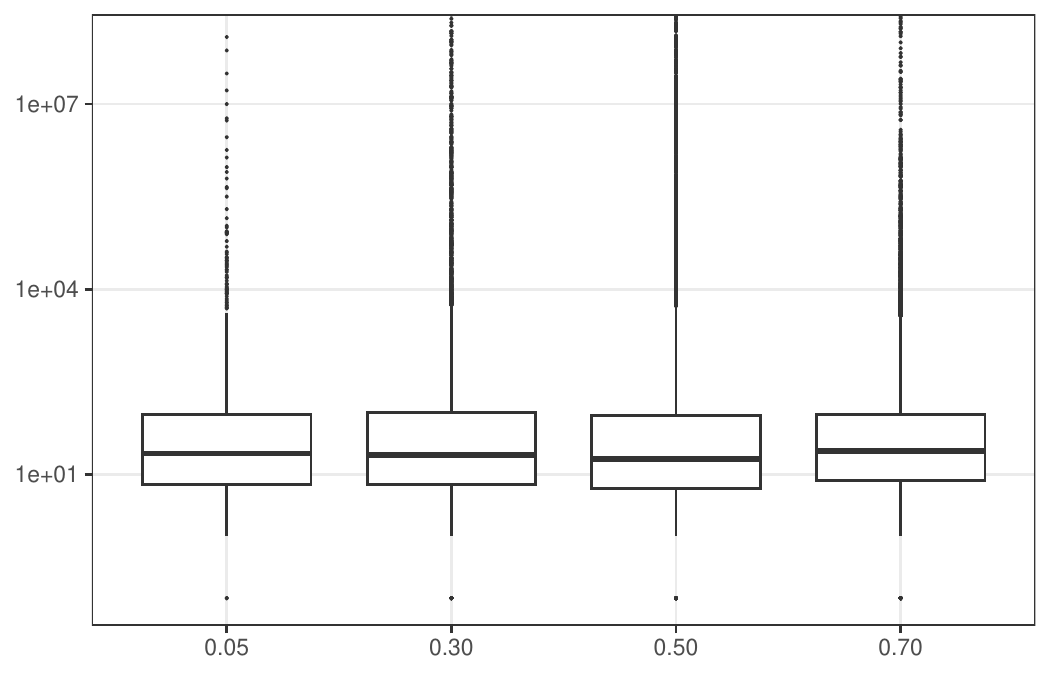}
    \end{tabular}
    \caption{
    Predictions under the MAR time--site missingness scenario for four missing rates (5\%, 30\%, 50\%, 70\%).
    \textbf{Top left:} ratio $|Y_{ij}-\widetilde{Y}_{ij}(X^{\covar}, \ZIPLNPCA)|\,/\,|Y_{ij}-\widetilde{Y}_{ij}((X^{\full}, \ZIPLNPCA)|$.
    \textbf{Top right:} ratio $|Y_{ij}-\widetilde{Y}_{ij}(X^\full, \ZIPLNPCA)|\,/\,|Y_{ij}-\widetilde{Y}_{ij}(X^F, \PLNPCA)|$.
    \textbf{Bottom left:} width of conditional prediction intervals from the \ZIPLNPCA model for the MAR site and years scenario of missing data.
    \textbf{Bottom right:} width of conditional prediction intervals from the \ZIPLNPCA model for the MCAR scenario of missing data.
    \
    }
    \label{fig:compare}
\end{figure}

The top right panel of Figure \ref{fig:compare} illustrates the benefit of modelling zero-inflation explicitly : prediction errors are smaller for the \ZIPLNPCA model than for its non–zero-inflated counterpart (\PLNPCA). By better capturing the large 
proportion of zeros present in the data set,  \ZIPLNPCA provides more accurate imputations across
all levels of missingness.\\


We now focus on prediction intervals, starting by assessing the coverage properties of the proposed prediction intervals. Because coverage is a theoretical probability, empirical coverage estimates vary across simulation replicates. Assuming correct calibration, this variability is governed by a binomial distribution, whose quantiles provide a natural reference for assessing whether observed coverage rates are compatible with the nominal level. The coverage rate falling within the binomial bands per scenario and the rate of missing data are presented in Appendix~\ref{app:cover_rate}. For the 10 replicates, empirical coverage rates of the conditional prediction intervals were generally compatible with the nominal $90\%$ level, although their stability varied across missing-data mechanisms. For all missing data scenarios, for a missing data rate greater than or equal to 0.30, out of 10 replicates, 9 or 10 fall within the binomial bands. In the case of missing data rates of 0.05, coverage rates may be higher than the rate predicted by the upper binomial band. For instance, for the MAR time and MAR time-site scenarios, the average coverage rates are respectively 0.938 and 0.933, compared to upper binomial bound of about 0.93. 


Moving on to prediction interval widths. The bottom left panel of Figure \ref{fig:compare} shows that the distribution of interval widths is broadly similar across missingness levels: the median width remains modest, typically of order 10 to 100, suggesting that most imputations are associated with reasonably tight uncertainty bounds. However, all scenarios exhibit an extremely long tail, with some intervals reaching unrealistic values above $10^{8}$. These large intervals are driven by the substantial variance in the original count data and by the sensitivity of PLN models to high variability, which can lead to rapidly increasing uncertainty. 

We present here conditional prediction intervals rather than marginal ones, as they yielded narrower intervals. However, for high rates of missing data, the widths of conditional and marginal intervals become equivalent, and it may be appropriate to use marginal prediction intervals, which take less time to calculate than conditional ones.


\FloatBarrier

\subsection{North African waterbird}

\subsubsection{Missing data imputation}

We now analyze the monitoring of waterbirds in North Africa, for which much of the data is missing. We focus on rare species~B. Its global conservation status on the  International Union for the Conservation of Nature (IUCN) Red List is \textit{Vulnerable}. 

\begin{figure}[H]
\centering

\subfloat[
Presence and missing data of species~B at 479 wetland sites in North Africa (1990--2023).
Rows correspond to sites and columns to years. MA: Morocco, DZ: Algeria, TN: Tunisia,
LY: Libya, EG: Egypt. 
\label{fig:heatmap}
]{
\includegraphics[width=0.48\textwidth]{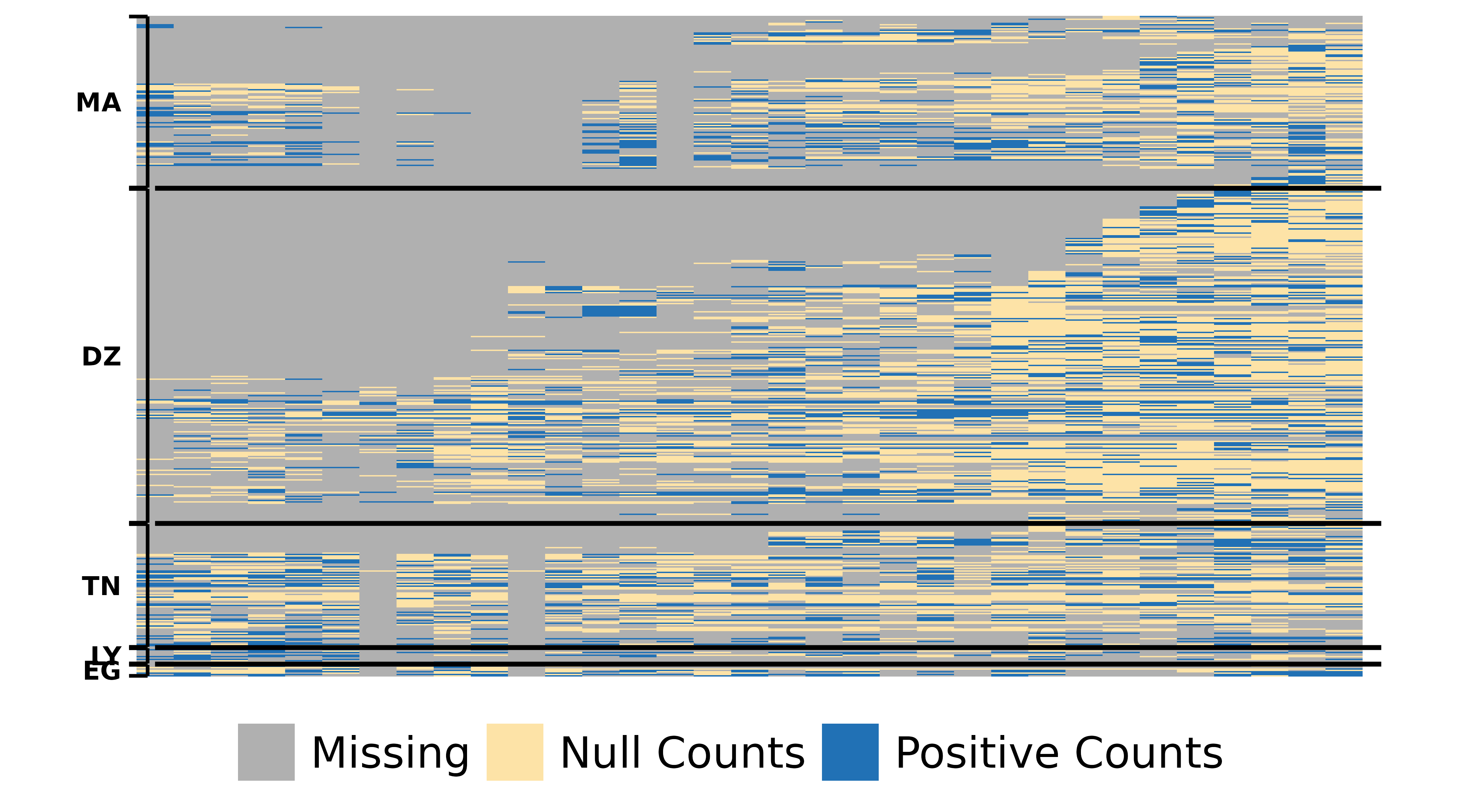}
}
\hfill
\subfloat[
Distribution of the observed abundances of species~B on 479 sites in North Africa
(1990--2023), shown as a histogram of $\log_{10}(1+\text{count})$.
\label{fig:hist}
]{
\includegraphics[width=0.48\textwidth]{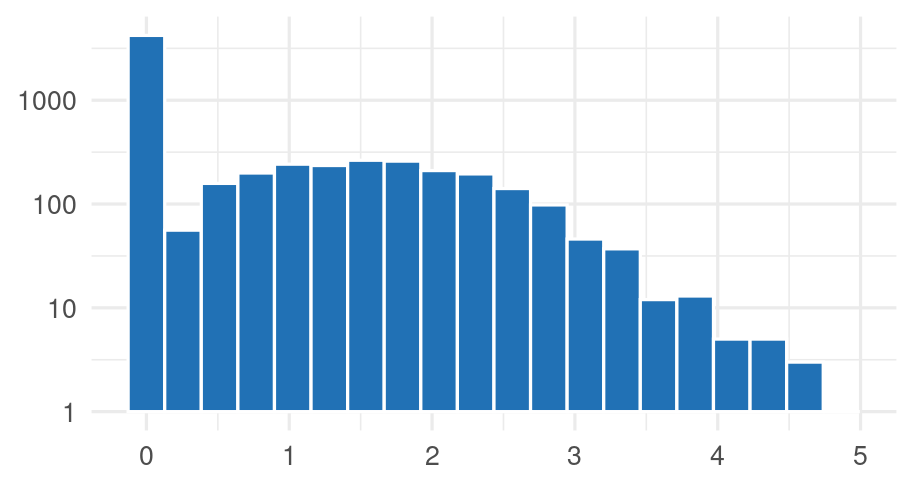}
}

\caption{
Overview of data structure and abundance distribution for species~B across North Africa (1990--2023).
(\textbf{a}) Spatio-temporal distribution of presence records and missing data across wetland sites.
(\textbf{b}) Distribution of observed abundances on a log$_{10}$ scale for both axis.
}
\label{fig:data_overview}
\end{figure}

Our data set comprises abundances collected over a 32-year period (1990–2023) from 479 wetland sites distributed across five North African countries (Morocco, Algeria, Tunisia, Libya, and Egypt), all of which recorded at least one observation of species~B during this period. 
The data set exhibits substantial incompleteness, with 61\% of entries missing, however, it satisfies the identifiability conditions since all pairs of years have been observed at least once at one site and the matrix $X^O$ is full rank. 
As illustrated in Figure~\ref{fig:heatmap}, the missing-data structure most closely corresponds to a  missing-at-random (MAR) mechanism jointly depending on year and site. This pattern reflects heterogeneous monitoring effort across space and year. \\
Among the observed values, 66\% are zeros, indicating pronounced zero inflation, which characterizes the rarity of species~B. 
This leaves us with $(1 - 0.66)(1-0.61) = 13\%$ of  positive observed counts.
The mean count per site and per year is 129.84 individuals, but the distribution is highly skewed, with a standard deviation of 1440.988 (see Figure~\ref{fig:hist}). 

Our first objective was to reconstruct the large proportion of missing values in the data set. 
To this aim we fitted the \ZIPLNPCA model with site and year effects and varying latent dimension. We selected the latent dimension using the BIC criterion defined in Equation \eqref{eq:BIC}, yielding $q=16$. 
The ICL citerion defined in Equation \eqref{eq:ICL} turned out to select the same dimension. 
This provides us with an estimate of the presence probability in each site for each year.
Then, we imputed the missing entries using conditional predictions \eqref{eq:condPred}. 
This imputation step enabled us to reconstruct the complete abundance matrix. 
These reconstructed abundances are plotted in Figure~\ref{fig:map_count} for the most recent year (2023). The corresponding estimated probabilties of presence are also shown in Figure~\ref{fig:map_pres} in Appendix~\ref{app:map_pres}.\\

We also computed the prediction interval associated with each imputation using Algorithm \ref{algo:condPI}.
We present a boxplots of the width on interval widths in Appendix~\ref{app:stat_width}.  
The way we constructed the prediction intervals implies that for all rates of missing data, the 10\%-quantile is equal to 0. 
The median and the 90\%-quantile are of the order of $10^2$. On the other hand, means and standard deviations are very large, which is related to certain very large interval widths and asymmetric intervals. 
For example, in 2023, prediction intervals for almost all imputations range between 0 and 500, with a single notable exception reaching values on the order of $10^{6}$. 
This extreme interval corresponds to the site exhibiting the highest mean and variance in the data set (mean $=17\,149$, standard deviation $=17\,333$), whereas all other sites have means and standard deviations below $5000$. 
In addition, the median of the Monte Carlo predictions, used to calculate the prediction intervals, for this missing value is $19\,188$, and its 80\%-quantile reaches $9\,845$, further illustrating the heavy-tailed behavior of Poisson log-normal models in highly variable settings.


\begin{figure}[H]
    \centering
    \includegraphics[width=0.8\textwidth, trim=30 30 0 30, clip=]{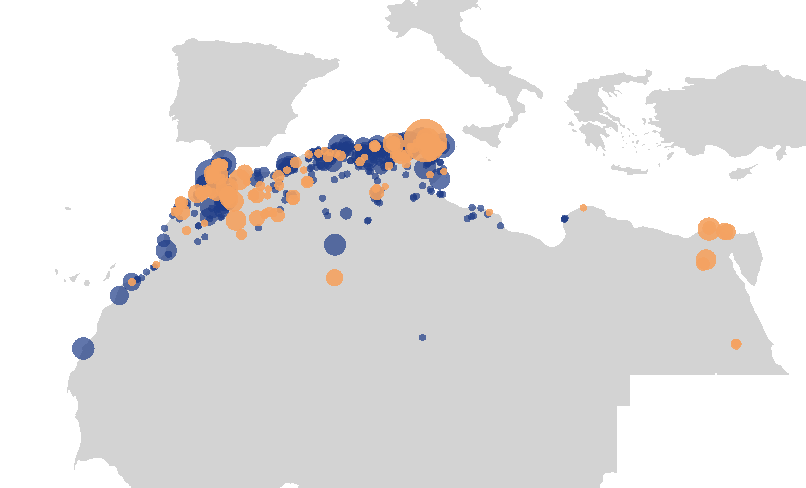}
    \caption{Map of the estimated population of species~B in 2023 across 419 North African sites.\\
    Blue points indicate observed counts, whereas orange points correspond to conditionally imputed values.}
    \label{fig:map_count}
\end{figure}

\subsubsection{Temporal trend estimation \label{sec:trend}} 


\paragraph{Method} Estimating trends is a recurring objective of bird population monitoring. We show now how the \ZIPLNPCA model can be used to estimate a temporal trend and assess its significance. We describe the procedure for the trend of the abundance (i.e. Poisson) part of the \ZIPLNPCA model: a trend for the presence (i.e. Bernoulli) part can be derived in the same way.

Consider the \ZIPLNPCA Model \ref{mod:ZIPLNPCA}, which includes sites and year effect, using $X^{\full}$ as described is section~4.1, that is 
\begin{equation} \label{eq:PLNtimespace}
    Y_{ij} \mid  U_{ij} = 1, Z_i \sim \Pcal(\exp(\beta_0 + \beta^{\text{C}}_j + \beta^{\text{R}}_i + (x_{ij}^{\text{E}})^\top \beta^{\text{E}} + Z_{ij}))  
\end{equation}
where $\beta^{\text{C}}_j$ is the effect of year $j$, $\beta^{\text{R}}_i$ the effect of site $i$ and $(x_{ij}^{\text{E}})^\top \beta^{\text{E}}$ represents the effects of the site-year covariates,
with $\beta^{\text{C}}_1   = \beta^{\text{R}}_1 =0$ to ensure identifiability.  
The variance of the estimated time effect $\widehat{\Var}_n(\widehat{\beta}{\text{C}})$ can be evaluated as described in Section \ref{sec:varTheta}. Note that the first year is included in the upcoming analysis, setting $\widehat{\beta}_1^\text{C}$ and $\widehat{\Var}_n(\widehat{\beta}_1^\text{C})$ to zero, so that $\widehat{\beta}^\text{C}$ has dimension $p$ and $\widehat{\Var}_n(\widehat{\beta}^\text{C})$ has dimension $p \times p$.

To estimate a linear trend, we may now fit a linear regression to the estimated time effects, that is to pose $\widehat{\beta}^{\text{C}}_j = \tau_0 + \tau_1 j + \varepsilon_j$.
Because the least-squares estimates of the intercept $\tau_0$ and slope $\tau_1$ have the close-form expression.
\begin{equation} \label{eq:linTrend}
[\widehat{\tau}_0 \; \widehat{\tau}_1]^\top 
= (X_{\text{time}}^{\!\top} X_{\text{time}})^{-1} X_{\text{time}}^{\!\top} 
\, \widehat{\beta}^{\text{C}},
\end{equation}
where $X_{\text{time}} = [1_p \; T]$, $1_p$ being the vector of dimension $p$ filled with ones, and $T$ having the same dimension and entries $1, 2, \dots p$, we may compute the variance of these estimates as
$$
\Var([\widehat{\tau}_0 \; \widehat{\tau}_1]^\top)
= (X_{\text{time}}^{\!\top} X_{\text{time}})^{-1} 
X_{\text{time}}^{\!\top} \Var(\widehat{\beta}^\text{C}) 
\, (X_{\text{time}} (X_{\text{time}}^{\!\top} X_{\text{time}})^{-1})^\top.
$$

This variance estimate allows us both to test the statistical significance of the slope and to construct a confidence interval for it. A trend for the probability of presence can be estimated (and tested) in the same way, replacing the $\beta$ from Equation \eqref{eq:PLN1} in Model \ref{mod:ZIPLNPCA} with the $\gamma$ of Equation \eqref{eq:PLN2}.

\begin{figure}[H]
\centering

\subfloat[
Temporal trend in abundance with estimated year effects (1990--2023).
\label{fig:trend_imputed}
]{
\includegraphics[width=0.48\textwidth]{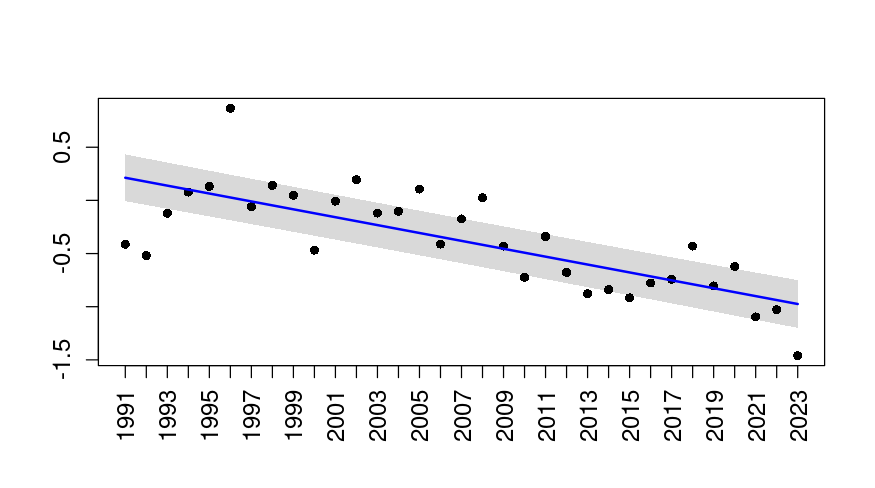}
}
\hfill
\subfloat[
Temporal trend in probability of presence with estimated year effects (1990--2023).
\label{fig:presence_trend}
]{
\includegraphics[width=0.48\textwidth]{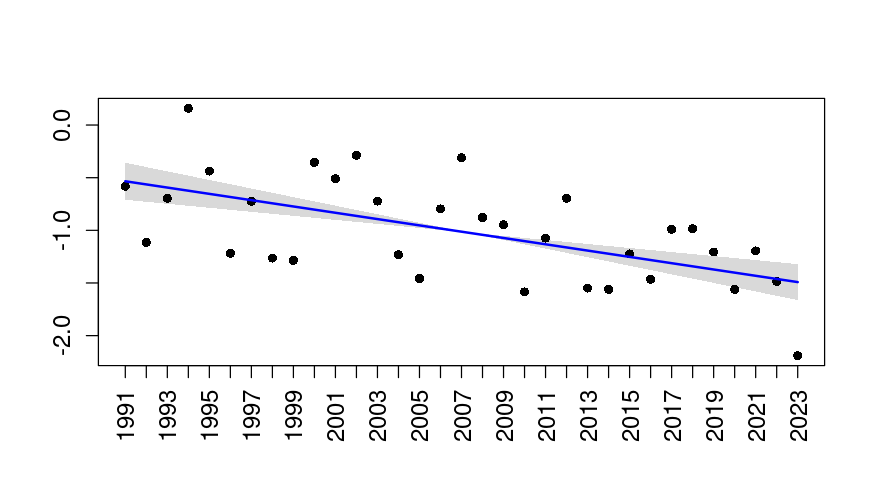}
}

\caption{
Temporal trends in abundance and probability of presence of species~B across North Africa
(1990--2023), as estimated from the fitted model.
}
\label{fig:model_outputs_north_africa}
\end{figure}

\paragraph{Result} In the case of species~B, we identified a significant linear relationship between time and the estimated year effects for both the abundance and the probability of presence components of the model over the period 1990-2023. More precisely, we revealed a significantly negative slope in both cases, with p-values lower than $10^{-7}$. The estimated slope was $-0.04$ for abundance (standard $= 0.023$) and $-0.03$ for the probability of presence (standard error $= 0.006$).\\

\paragraph{Interpretation}
The combination of Models \eqref{eq:PLNtimespace} and \eqref{eq:linTrend} yields that the abundance is expected to be multiplied by a factor $\exp(\tau_1)$ each year.
In the case of species~B, this result indicates a marked decline in abundance and probability of presence. Abundance is multiplied by $\exp(-0.04) \approx 0.96$, which means that abundance decreases by approximately $4\%$ per year. Using a similar reasoning for the probability of presence, we find that it decreases by approximately $3\%$ per year.

\paragraph{Further use.} 
Alternatively, linear temporal trends can be estimated using a single temporal covariate corresponding to years (that is to encode the time effect into a single vector with entries 1990, 1991, \dots, 2023). 
However, the approach we propose offers several advantages. 
We showed that year effects constitute the most informative time covariates for imputation; using them therefore allows a unified framework in which the same temporal structure is exploited both for imputation and for trend estimation. 
Also, although we provide an estimate of the (linear) trend, we do not assume that the effect of time is linear, which could distort the estimation of all other effects. 
An additional interest of the use of a year-specific effect is presented in the next section.

\subsubsection{Change in the trend \label{sec:changepoint}} 


Not all species exhibit strictly linear temporal dynamics. 
Some populations may increase during one period and decrease during another, or vice versa.
By relying on estimated year effects, our approach allows the detection of potential changes in temporal trends, which would not be possible when estimating a single linear trend through a year covariate. 
We illustrate how to detect such a break in the trend using the monitoring of a third species (designated by C) in North Africa.

We use the same notations as in Section \ref{sec:trend} for the estimation of the trend. For a change-point occurring on year $t$, we consider the following segmented regression model:
\begin{equation} \label{eq:changepoint}
\widehat{\beta}_j^{C}
= \tau_0^t + \tau^t_1 j + \tau^t_2 \,   \mathbb{I}\{j > t\}  (j - t) + \varepsilon_j,
\end{equation}
where $\mathbb{I}\{A\}$ is one if $A$ is true, and zero otherwise. 
Let $D_t$ be the vector of dimension $p$, whose first $t$ coordinates are zero and whose subsequent coordinates are equal to $1, 2, \dots, p-t$. We define the design matrix $X^t_{\text{time}} = [1_p \; T \; D_t]$, so that model \eqref{eq:changepoint} can be written as $\widehat{\beta}^{C}=X^t_{\text{time}}\tau+\varepsilon$, with $\tau^t = [\tau^t_0 \; \tau^t_1 \; \tau^t_2]^\top$. 

Once again, the ordinary least squares estimator of $\tau^t$ admits the closed expression $
\widehat{\tau}^{\,t} = ( X_{\text{time},t}^{\top} X_{\text{time},t} )^{-1} X_{\text{time},t}^{\top} \widehat{\beta}^{C}$.
As in Section \ref{sec:trend}, we have access to the asymptotic variance of $\widehat{\tau}^t$ (by replacing $X_{\text{time}}$ with $X^t_{\text{time}}$ in Equation \eqref{eq:linTrend}) and we can test the null hypothesis $\{\tau^t_2 = 0\} $ to assess the significance of a change on year $t$. 
Let $p^t$ be the p-value associated with this test. 
The changepoint $\delta$ can then be estimated by
$$
\widehat{\delta} = \underset{t \in \{2, \dots, p-2\}}{\text{argmin}} p^t.
$$
The significance of the change can be assessed using Bonferroni correction, i.e., by comparing the corrected p-value $(p-2) p^{\widehat{\delta}}$ with the nominal level of each individual test (e.g., $\alpha = 5\%$).


\begin{figure}[H]
    \centering
    \includegraphics[width=0.85\textwidth]{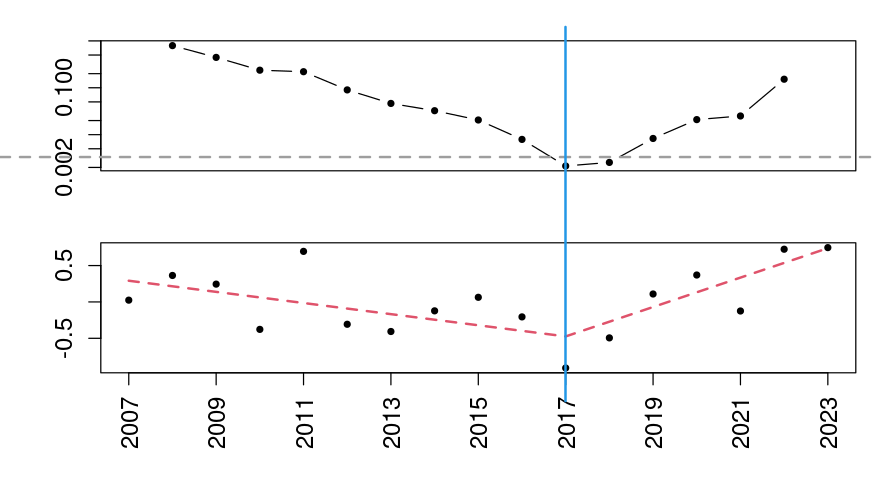}
    \caption{
    Change-point detection for species~C. 
    Top: Wald test p-values (log-scale) for each candidate change-point year; the dashed grey horizontal line represents the Bonferroni-corrected significance threshold.
    Bottom: regression fitted to $\widehat{\beta}^{C}$, with the estimated change-point indicated by the vertical blue line. \\
    \label{fig:changepoint_pvalue_trend}}
\end{figure}

Figure~\ref{fig:changepoint_pvalue_trend} displays the results of this procedure for the monitoring of species~C in North Africa for which a significant changepoint is estimated on year $\widehat{\delta} = 2017$.
We estimate a slope of $\widehat{\tau_1} =-0.08$  between 2006 and 2017 
and a slope of $\widehat{\tau_1} + \widehat{\tau_2}=0.2$  between 2017 and 2023. 
We thus conclude that species~C experienced an annual decline of approximately $8\%$ 
per year between 2006 and 2017, followed by an annual increase of approximately $20\%$ 
from 2017 onwards.


\FloatBarrier

\section{Discussion}



We have presented a method for inferring a log-normal Poisson model with zero inflation, adapted to missing data, for imputing missing observations. 
This architecture, building upon the PLN-PCA model, is adapted to species population monitoring, where both zeros and missing data are abundant, especially when dealing with rare species.
In addition, we provide an estimate of the parameter variance, which enables the construction of prediction intervals. 
These intervals improve uncertainty quantification.
All these tools are compiled in the colvR package available on CRAN.

The framework we propose serves as a valuable tool for estimating population trends accross time, particularly for rare or threatened species for which conservation status is a critical concern. 
We have shown that our model, adjusted for year effects, frees us from the assumption of linearity in the trend and gives us the space to treat the latter as a separate variable. 
We have also illustrated how the information resulting from this modelling allows to address more complex questions, such as the detection of a change in the trend of the population size.
Finally, although adjusting the model with covariates is less effective than adjusting it with site and year effects in terms of imputation accuracy, it still allows us to evaluate the effect of covariates on abundance and to test their significance.

One limit of the model lies in the way the probability of presence is calculated. 
Unlike abundance, probability of presence does not benefit from a latent variable that can account for what is not taken into account in the covariates and to borrow information from one year to another. 
In a preliminary work, we considered a version of the proposed \ZIPLNPCA model including a latent layer in Bernoulli part. We addressed the problem by attempting to model presence alone with a latent Gaussian variable, but we were unable to achieve convergence of a VEM algorithm with this model. This explains why we did not include a latent variable in the presence part of our model. 
We also attempted to use the same latent variables for the presence and the abundance part, but we did not observe any improvement in the prediction of zeros with these variables. 

More importantly, the quality and explanatory power of the covariates play a critical role. The more variance the covariates can explain, the more reliable and accurate the predictions will be.
This underlines the importance of selecting relevant covariates in ecological modeling, especially when imputation is at stake. Although site and year effects turn out to be more relevant, when imputation is at stake, including relevant and interpretable covariates is critical to better understand ecological mechanisms, to distinguish the main environmental drivers for the species conservation, or to assess the efficiency of a given conservation policy.

\paragraph{Acknowledgements.} 
The authors thank the national coordinators of the International Waterbird Census for kindly making their winter waterbird count data available, allowing us to confirm the practical application of the method using their datasets. Many thanks to Mr. Marco Zenatello and Dr. Nicola Baccetti, Italian National Coordinators for ISPRA (Istituto Superiore per la Protezione e la Ricerca Ambientale), the ONCFS/FNC/FDC Wetlands Waterbird Network as well as the Mediterranean Waterbirds Network and its coordinators: Dr. Mohamed Dakki, Moroccan National Coordinator for GREPOM (Groupement pour la Protection des Oiseaux au Maroc); Dr. Khaled Etayeb, Libyan National Coordinator for LSB (Libyan Society for Birds); Ms. Nadjiba Bandjedda and Mr. Mohamed Samir Sayoud, Algerian National Coordinators for DGF (Direction Générale des Forêts, Algeria); and Mr. Hichem Azafzaf, Tunisian National Coordinator for AAO/BirdLife Tunisia (Association des Amis des Oiseaux/BirdLife Tunisia). We are grateful to the INRAE MIGALE bioinformatics facility (MIGALE, INRAE, 2020. Migale Bioinformatics Facility, doi: 10.15454/1.5572390655343293E12) for providing help and/or computing and/or storage resources. This work was founded by the Tour du Valat (Research Institute for the Conservation of Mediterranean Wetlands), the Agence Française de Développement and the institut des Mathématiques pour la Planète Terre.

\bibliographystyle{plainnat}
\bibliography{bibliography}

\appendix
\clearpage

\onecolumn

\renewcommand{\thefigure}{A.\arabic{figure}}
\renewcommand{\thetable}{A.\arabic{table}}
\renewcommand{\thealgorithm}{\arabic{algorithm}}
\setcounter{figure}{0}
\setcounter{table}{0}


\section{Moments of the PLN distribution}\label{proof:propmoments}
 
\begin{proof}
Let us define $$\mu_{ij} = x_{ij}^T \beta$$ and $\lambda_{ij} = \mu_{ij} + Z_{ij}$.
When the species is present $(U_{ij}=1)$, Model \ref{mod:ZIPLNPCA} inherits the general properties of the Poisson log-normal model given by \citet{AiH89}, namely
\begin{align*} 
    \text{First moment:} & & E^+_{ij} := \mathbb{E}[Y_{ij} \mid U_{ij} = 1] 
    & = \mathbb{E}[\exp(\lambda_{ij})] = \exp(\mu_{ij} + \tfrac{1}{2} \sigma_j^2), \\
    \text{Second moment:} & &
    \mathbb{E}[Y_{ij}^2 \mid U_{ij} = 1] 
    & = \mathbb{E}[\exp(\lambda_{ij}) + \exp(2\lambda_{ij})] 
    = \exp(\mu_{ij} + \tfrac{1}{2} \sigma_j^2) + \exp(2\mu_{ij} + 2\sigma_j^2),\\
    \text{Variance:} & &  V^+_{ij}   := \Var(Y_{ij} \mid U_{ij} = 1) &= E^+_{ij} + (E^+_{ij})^2 (e^{\sigma_j^2} - 1 ), \\
  \text{Third moment:} & &
   \mathbb{E}[Y_{ij}^3 \mid U_{ij} = 1] 
 & = \mathbb{E}[\exp(\lambda_{ij}) + 3\exp(2\lambda_{ij}) + \exp(3\lambda_{ij})] \\
   & & & = \exp(\mu_{ij} + \tfrac{1}{2} \sigma_j^2) + 3\exp(2\mu_{ij} + 2\sigma_j^2) + \exp(3\mu_{ij} + \tfrac{9}{2}\sigma_j^2)\\
   \text{Covariance:} & &  C^+_{ijk} &:= \Cov(Y_{ij}, Y_{ik} \mid U_{ij} U_{ik} = 1)  = E^+_{ij} E^+_{jk} (e^{\sigma_{jk}} - 1) & 
\end{align*}
Accounting for the possible absence of the species $\mathbb{P}(U_{ij}=1)= \pi_{ij}$ yields the forms for the unconditional moments of Proposition \ref{prop:moments} and of the demonstration of Proposition \ref{prop:identif1}.





\end{proof}

\section{Additional formulas and algorithms}

\subsection{Second order derivatives of the ELBO} \label{app:secondDeriv}

We give here the second order derivatives of the contribution $J_i(\theta, \pt)$ of the $i$-th to the ELBO, defined in Equation \eqref{eq:elboi}. The second derivatives are for the model parameter $\theta \gamma, \beta, C)$ are
\begin{align*}
    \partial^2_{\gamma, \gamma^\top} J_i 
    & =\sum_j \Omega_{ij} \probU_{ij} (\probU_{ij} -1) x_{ij} x_{ij}^\top, \\
    \partial^2_{\beta, \beta^\top} J_i 
    & =- \sum_j \Omega_{ij} \xi_{ij} A_{ij} x_{ij} x_{ij}^\top, \\
    \partial^2_{C_j, C_j^\top} J_i 
    & = - \Omega_{ij} \xi_{ij} A_{ij} \left((m_i + (C_j \odot s_i))(m_i + (C_j \odot s_i))^\top +
    S_i\right), 
\end{align*}  
the cross-second derivatives $\partial^2_{\beta \gamma} J_i$ and $\partial^2_{\gamma C_j} J_i$ are zero and
\begin{align*}
    \partial^2_{\beta C_j^\top} J_i & = - \Omega_{ij} \xi_{ij} A_{ij} x_{ij} (m_i + C_j \odot s_i)^\top .
\end{align*}  
Regading the variational parameters $\psi = (M, S)$, the second derivatives are
\begin{align*}
    \partial^2_{m_i, m_i^\top} J_i & = - \sum_j \Omega_{ij} \xi_{ij} A_{ij} C_j C_j^\top - I_q, \\
    \partial^2_{s_i, s_i^\top} J_i & = - \frac12 \left(S_i^{-2} + \sum_j \Omega_{ij} \xi_{ij} A_{ij} (C_j \odot C_j)(C_j \odot C_j)^\top\right), \\
    \partial^2_{\xi_{ij}, \xi_{ij}} J_i & = - \frac{1}{\xi_{ij}(1 - \xi_{ij})},
\end{align*}  
and the second cross-derivatives are 
\begin{align*}
    \partial^2_{m_i, s_i^\top} J_i & =- \frac12 \sum_j \Omega_{ij} \xi_{ij} A_{ij} C_j (C_j \odot C_j)^\top, &
    \partial^2_{m_i, \xi_{ij} } J_i & =\Omega_{ij} (Y_{ij} - A_{ij}) C_j, \\
    \partial^2_{s_i, \xi_{ij}} J_i & =- \frac{1}{2} \Omega_{ij} A_{ij} (C_j \odot C_j).
\end{align*}  
Finally, the second cross-derivatives between model and variational parameters are
\begin{align*}
    \partial^2_{\beta, m_i^\top} J_i & =- \sum_j \Omega_{ij} \xi_{ij} A_{ij} x_{ij} C_j^\top, &
    \partial^2_{\beta, s_i^\top} J_i & =- \frac12 \sum_j \Omega_{ij} \xi_{ij} A_{ij} x_{ij} (C_j \odot C_j)^\top, \\
    \partial^2_{\beta \xi_{ij} } J_i & = \Omega_{ij}(Y_{ij} - A_{ij}) x_{ij}, &
    \partial^2_{\gamma \xi_{ij} } J_i & = \Omega_{ij} x_{ij}, \\
    \partial^2_{C_j \xi_{ij} } J_i & = \Omega_{ij} \left((Y_{ij} - A_{ij}) m_i - A_{ij} (C_j \odot s_i)\right),
  \end{align*}  
all others second cross-derivatives being zero.



\subsection{Marginal prediction}  \label{app:margPred}

\begin{algorithm}[H]
\caption{Marginal predictions and prediction intervals}
\label{algo:margPI} ~
    \begin{description}
        \item[{\sl Input:}] estimates $\widehat{\theta}$ and $\widehat{\Var}(\widehat{\theta})$, number of particles $B$. 
        \item[{\sl Output:}] conditional intervals for the $\Esp(Y_{ij} \mid Y^O)$ and conditional prediction interval for the $Y_{ij}$. 
        \item[{\sl Algorithm:}] ~
        \begin{enumerate}
            \item For $b = 1 \dots B$, 
            \begin{enumerate}
                \item Sample $\theta^b \sim \Ncal\left(\widehat{\theta}, \widehat{\Var}(\widehat{\theta})\right)$, denote $\gamma^b$, $\beta^b$ and $\Sigma^b$ the resulting model parameters and compute
                $$
                \widehat{Y}_{ij}
                = \widehat{\pi}_{ij} \exp\left(x_{ij}^\top \widehat{\beta} + \widehat{\sigma}_{jj} / 2\right).
                $$
                \item For each site $i$, such that $\Mcal_i \neq \varnothing$, sample independently $W^b_i \sim \Ncal(0, \Sigma^b_i)$, compute $Z_i^b = C^b W_i^b$, and for each year $j \in \Mcal_i$, sample $U^b_{ij} \sim \Bcal(\pi_{ij}^b)$, where $\logit(\pi_{ij}^b) = x_{ij}^\top \gamma^b$, set $Y^b_{ij} = 0$ if $U^b_{ij} = 0$ and, otherwise, sample independently 
                $$
                Y^b_{ij} \sim \Pcal\left(\exp\left(x_{ij}^\top \beta^b\right)\right);
                $$ 
            \end{enumerate}
            \item Determine the intervals for each missing  observation $(i, j) \in \Mcal$:
            \begin{itemize}
                \item Use the empirical quantiles of the $(\widehat{Y}^b_{ij})_{1 \leq b \leq B}$ to determine a marginal confidence interval for each conditional expected count $\Esp(Y_{ij})$;
                \item Use the empirical quantiles of the $(Y^b_{ij})_{1 \leq b \leq B}$ to determine a marginal prediction interval for $Y_{ij}$.
            \end{itemize}
        \end{enumerate}
    \end{description}
\end{algorithm}

\section{Additional information and simulations}

\subsection{Covariates} \label{app:covariates}
\paragraph{Covariates associated with the Bird A}
\begin{description}
    \item[Sites covariates:]~Longitude, latitude, humid zone's superficie in squared kilometers, distance to the nearest town in kilometers, distance to the nearest coast in kilometers, average elevation.
    \item[Years covariates:]~Year, winter northwestern european temperature anomaly, spring northwestern european temperature anomaly, winter northeastern european temperature anomaly, spring northeastern european temperature anomaly, mean winter north atlantic oscillation, average spring precipitation in northeastern Europe for year n-1, average spring precipitation in northwestern Europe for year $n-1$.
    \item[Site-year covariates:] Average winter precipitation. 
\end{description}

\subsection{Simulation Study: Results on the species~A dataset} 
\label{app:sim2}

Here we present the results of comparisons between, on the one hand, the model with covariates and the model with site effects and year effects and, on the other hand, the inflated zero model and the non-inflated zero model for all scenarios of artificially added missing data in the species~A abundance dataset.

\begin{figure}[H]
\centering

\subfloat[
\label{fig:ratio_zip_pln}
]{
\includegraphics[
  width=\textwidth,
  height=0.40\textheight,
  keepaspectratio
]{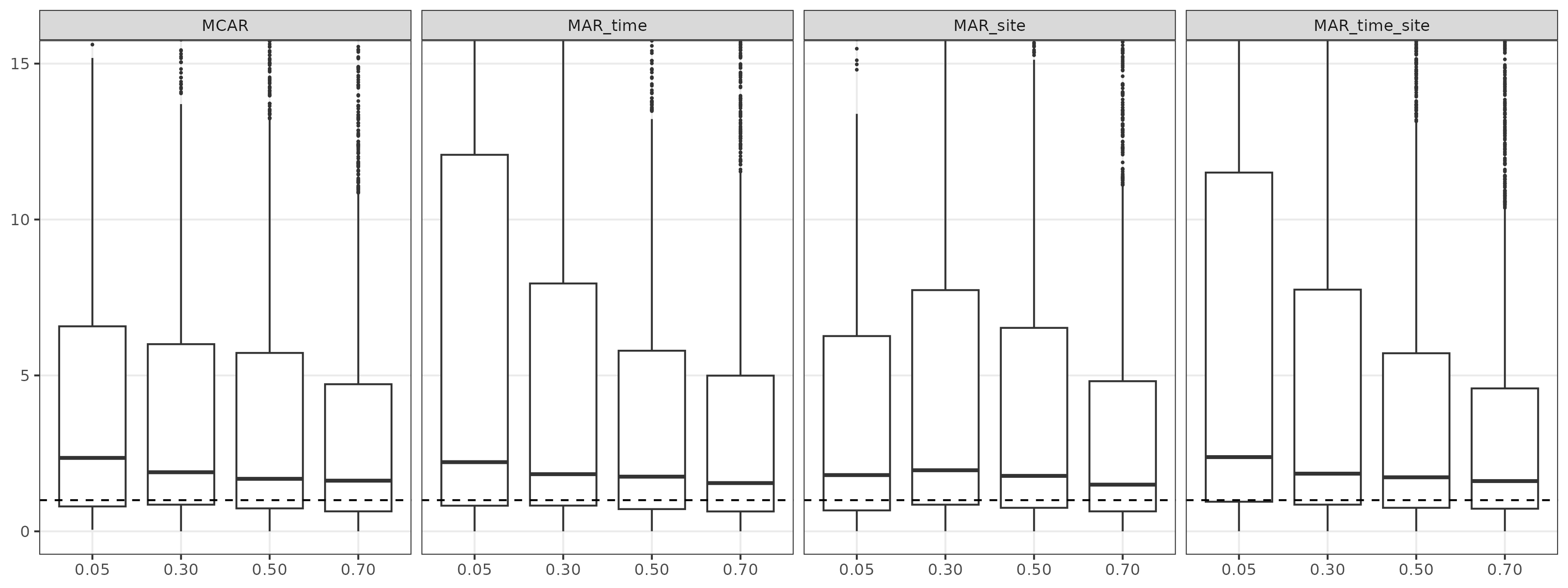}
}

\vspace{0.2cm}

\subfloat[
\label{fig:ratio_cov_x}
]{
\includegraphics[
  width=\textwidth,
  height=0.40\textheight,
  keepaspectratio
]{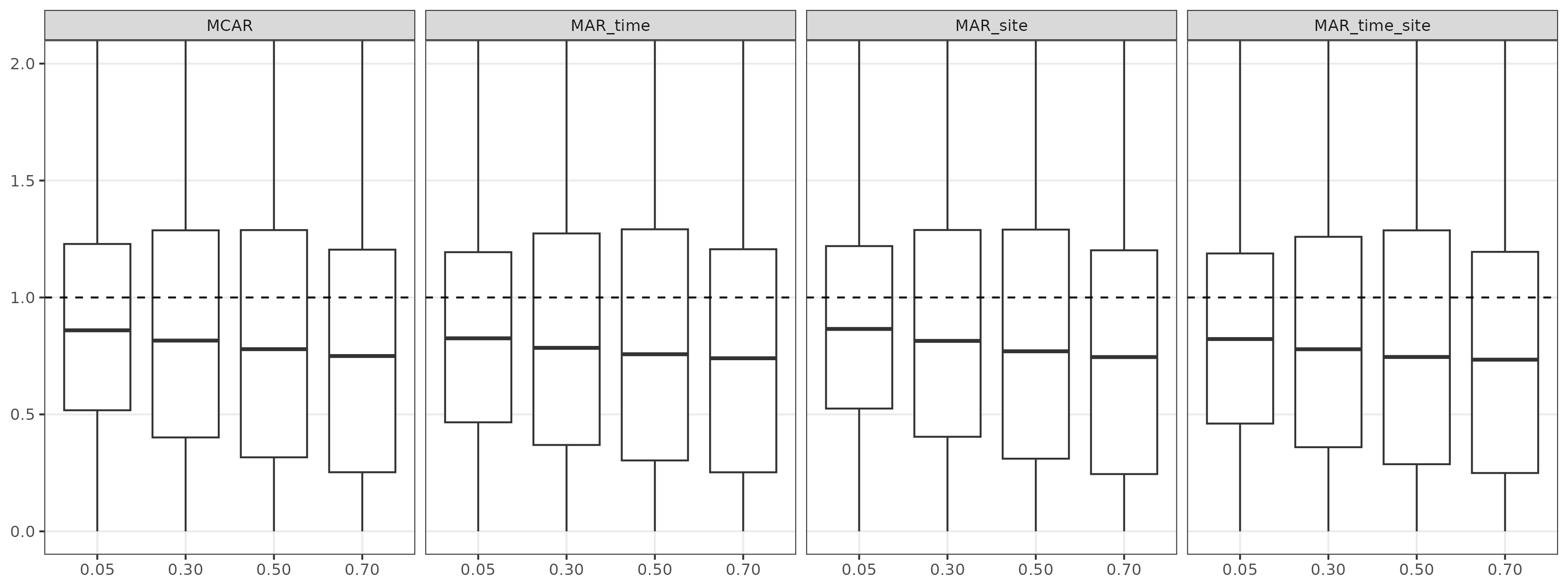}
}

\caption{
Comparison of relative imputation errors across missing-data scenarios and missing rates.
(\textbf{a}) Boxplots of the ratio between imputation errors obtained with the covariate-based model and the site and year effects model ($X^{G} / X^{F}$), computed on missing entries only.
(\textbf{b}) Boxplots of the ratio between imputation errors obtained with \textsc{ZIPLNPCA} and \textsc{PLNPCA}.
In both panels, results are shown for the four missing-data mechanisms and the four missing-data rates, and ratios are computed across the missing entries of the 100 simulation replicates for each configuration.
}
\label{fig:ratio_comparison}
\end{figure}

\subsection{Influence of the proportion of variance explained by covariates on prediction accuracy} \label{app:sim1}

We present here a small simulation study to illustrate the importance of the quality of the covariates, when aiming at accurate imputations.

\paragraph{Simulation design} 

Our goal is to vary, for a fixed total variance, the proportion explained by the covariates versus that explained by the latent structure. 
Specifically, we distributed the total variance between the linear predictor $X\beta$, which corresponds to the variance explained by the covariates, and the latent component governed by $\Sigma$. We made the fraction of variance explained by $\beta$ varied from 0\% to 100\%, by steps of 20\%,which yields six different scenarios.
For each scenario, the marginal moments of the simulated counts were calibrated to match a mean of $20$ and a variance of $5000$. We simulated count matrices with $n = 500$ rows (sites) by $p = 20$ columns (years). 
In addition, for each of these six configurations, we introduced three levels of missing data ($5\%$, $30\%$, and $60\%$) with an MCAR pattern.

\paragraph{Results}
Figures~\ref{fig:grand_panorama} display the true versus predicted values across the six variance allocation scenarios for missing data rates of 5\%, 30\%, and 60\%. Two main results emerge. 

\begin{enumerate}
    \item When most of the variance is explained by the covariates, predictions closely align with the diagonal even at higher missing rates, as seen in the last two lines of the figure, highlighting the stabilizing role of covariates that capture the main signal. 
    \item When covariates explain little: as the latent share increases, prediction scatter increases, as seen in the 2 first lines of the figure, and this effect amplifies as the missing rate rises from 5\% to 60\%.
\end{enumerate}

This simulation study highlights the importance of selecting covariates that best describe the sites and years so that they can account for the largest share of the variance.


\begin{figure}
    \centering
    \includegraphics[
        height=0.90\textheight,
        keepaspectratio,
        trim=1.5cm 1.5cm 1.5cm 1.5cm,
        clip
    ]{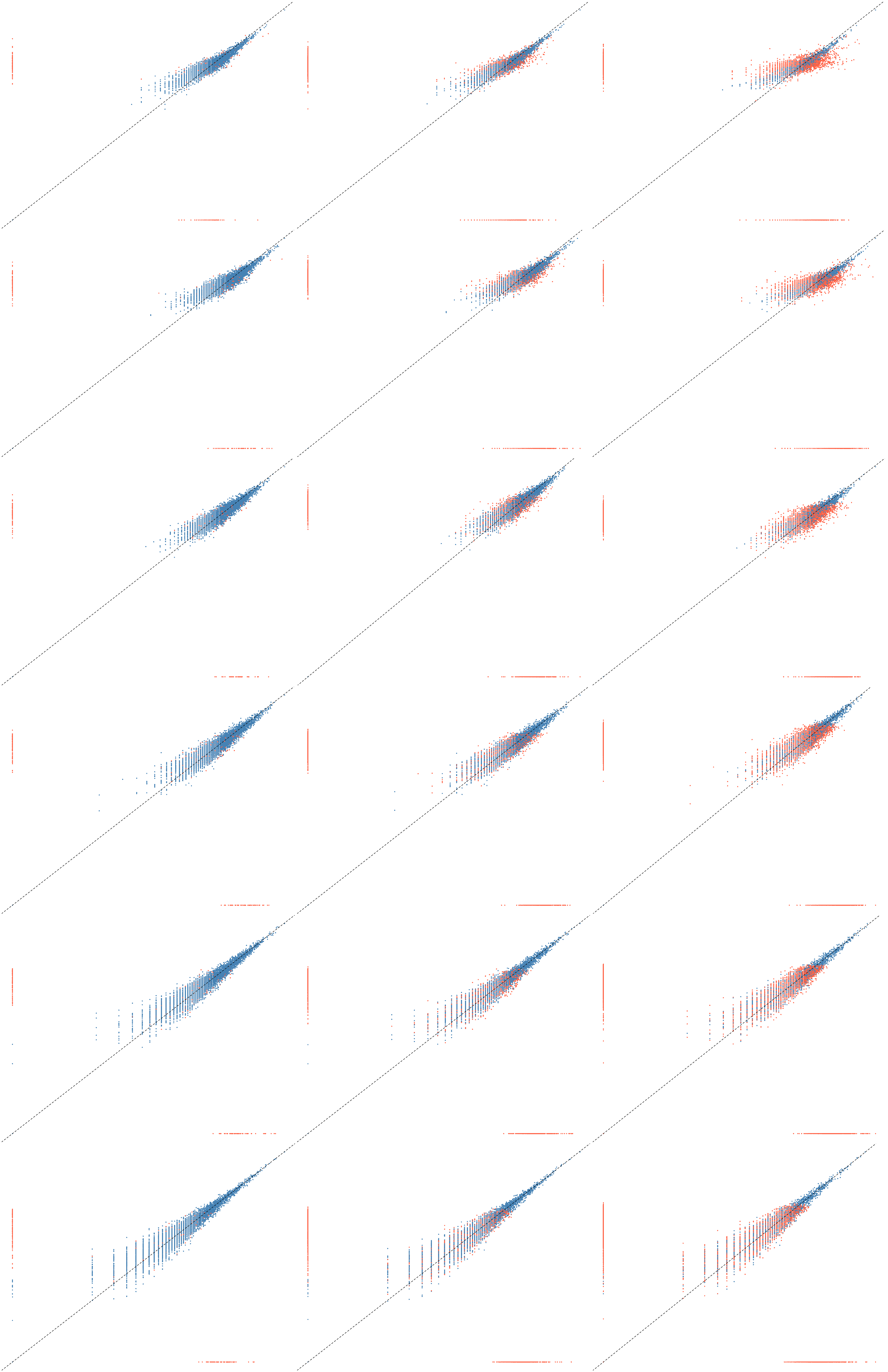}
    \caption{Comparison between true and predicted values across simulation scenarios.
    Columns correspond to increasing proportions of missing data (from left to right: 5\%, 30\%, and 60\%).
    Rows correspond to simulation scenarios with an increasing proportion of variance explained by covariates and a decreasing contribution of latent variables from top to bottom.
    Blue points represent observed data, whereas red points represent missing data.}
    \label{fig:grand_panorama}
\end{figure}


\subsection{Prediction interval coverage rate}
\label{app:cover_rate}

The table reports, for each missing-data mechanism and missing-data rate, the proportion among the 10 simulation replicates whose empirical coverage of the prediction intervals falls within the binomial reference bands associated with the nominal coverage level. More specifically, for a simulation replicate with $M$ missing observations, the number of prediction intervals that actually contain the observed value follows a binomial distribution $\Bcal(M, 0.9)$. We calculated the 2.5\%- and 97.5\%-quantiles of this binomial distribution and show in Table \ref{tab:inside_neff_rate}, for each simulation configuration, the number of replicates for which the proportion of covering intervals lies between these quantiles.

\begin{table}[H]
\centering
\caption{Number of simulation replicates (among 10) whose empirical coverage falls within the binomial bands, for each missing-data mechanism (pattern) and missing-data rate.}
\label{tab:inside_neff_rate}
\begin{tabular}{lcccc}
\hline
\textbf{Pattern} & \textbf{0.05} & \textbf{0.30} & \textbf{0.50} & \textbf{0.70} \\
\hline
MAR site       & 9 & 10 &  9 &  9 \\
MAR time       & 3 &  9 & 10 &  9 \\
MAR time-site & 5 &  9 &  9 &  9 \\
MCAR            & 7 & 10 & 10 & 10 \\
\hline
\end{tabular}
\end{table}

\subsection{Width of the prediction intervals}
\label{app:stat_width}

Here we present descriptive statistics on the widths of prediction intervals for imputations in counts of species 

\begin{table}[H]
\centering
\caption{Summary statistics of prediction interval widths for missing values of $Y$.
Quantiles correspond to those displayed in the boxplot (Q1--median--Q3), with additional
deciles (10\% and 90\%).}
\label{tab:ip_width_summary}
\begin{tabular}{lrrrrrrr}
\toprule
 & q10 & q25 & Median & q75 & q90 & Mean & SD \\
\midrule
Interval width &
0 &
6 &
27 &
119.4 &
390.2 &
494.3 &
11469.3 \\
\bottomrule
\end{tabular}
\end{table}

\subsection{Map of presence probability for Bird B}
\label{app:map_pres}

\begin{figure}[H]
    \centering
    \includegraphics[width=0.8\textwidth, trim=30 30 0 30, clip=]{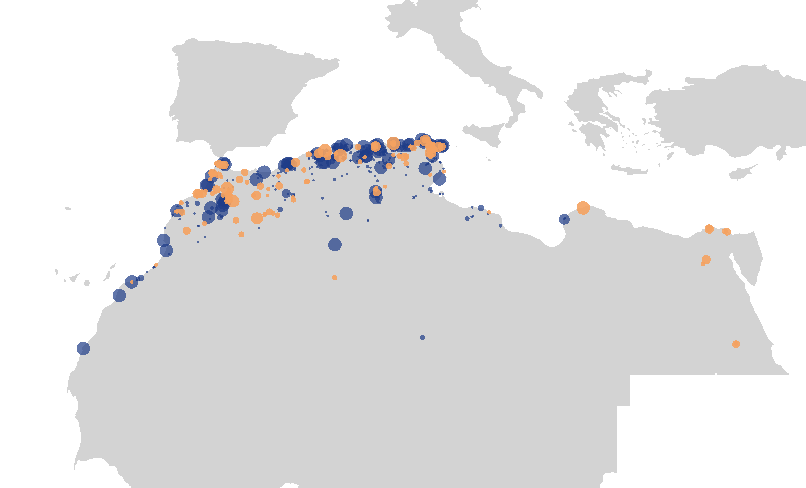}
    \caption{Map of the estimated presence probability of Bird B in 2023 across 419 North African sites.\\
    Blue points indicate observed counts, whereas orange points correspond to conditionally imputed values.}
    \label{fig:map_pres}
\end{figure}

\end{document}